\documentclass[12pt]{article}

% Package to generate and customize Algorithm as per ACM style
%\usepackage[ruled]{algorithm2e}
%\renewcommand{\algorithmcfname}{ALGORITHM}
%\SetAlFnt{\small}

%\SetAlCapFnt{\small}
%\SetAlCapNameFnt{\small}
%\SetAlCapHSkip{0pt}
%\IncMargin{-\parindent}

\usepackage{cite}
\usepackage{amsfonts}
\usepackage{graphicx}
\usepackage{amsmath}
\usepackage{subfigure}

\newenvironment{definition}[1][Definition]{\begin{trivlist}
\item[\hskip \labelsep {\bfseries #1}]}{\end{trivlist}}
\newtheorem{theorem}{Theorem}[section]

\newenvironment{proof}[1][Proof]{\begin{trivlist}
\item[\hskip \labelsep {\bfseries #1}]}{\end{trivlist}}
\newcommand{\qed}{\nobreak \ifvmode \relax \else
      \ifdim\lastskip<1.5em \hskip-\lastskip
      \hskip1.5em plus0em minus0.5em \fi \nobreak
      \vrule height0.75em width0.5em depth0.25em\fi}

\date{}

\begin{document}

% Page heads
%\markboth{Chintakunta and Krim }{Topological Fidelity in Sensor Networks}

% Title portion
%\title{Topological Fidelity in Sensor Networks}
%\author{Harish Chintakunta
%\affil{North Carolina State University}
%Hamid Krim
%\affil{North Carolina State University}
%}

\title{Topological Fidelity in Sensor Networks}
\author{Harish Chintakunta and Hamid Krim}
\maketitle

\begin{abstract}
Sensor Networks are inherently \emph{complex networks}, and many of their associated problems require analysis of some of their global characteristics.  These are primarily affected by the topology of the network. We present in this paper, a general framework for a topological analysis of a network, and develop distributed algorithms in a generalized combinatorial setting in order to solve two seemingly unrelated problems, 1) Coverage hole detection and Localization and 2) Worm hole attack detection and Localization. We also note these solutions remain coordinate free as  no priori localization information of the nodes is assumed. For the coverage hole problem, we follow a ``divide and conquer approach'', by strategically dissecting the network so that the overall topology is preserved, while efficiently pursuing the detection and localization of failures. The detection of holes, is enabled by first attributing a combinatorial object called a "Rips Complex"  to each network segment, and by subsequently checking the existence/non-existence of holes by way of triviality of the first homology class of this complex. Our estimate exponentially approaches the location of potential holes with each iteration, yielding a very fast convergence coupled with optimal usage of valuable resources such as power and memory. We then show a simple extension of the above problem to address a well known problem in networks, namely the localization of a worm hole attack.  We demonstrate the effectiveness of the presented algorithm with several substantiating examples.
\end{abstract}

%\end{document}

%\category{C.2.1}{Computer-Communication Networks}{Network Architecture and Design}

%\terms{Deployment, Topology, Algorithms}

%\keywords{Wireless sensor networks, Coverage, Worm hole, Distributed signal Processing}

%\maketitle
%\normalsize

\section{\large{\textbf{Introduction}}}
\label{sec:Introduction}
The infrastructure of computing systems is rapidly transitioning from centralized systems to distributed and pervasive systems. A very important class of such systems are sensor networks which find applications in areas including Environmental monitoring, Health care  and Military operations \cite{WSNsurvey}. There has been a considerable research interest in this field over the past decade, addressing problems including node localization \cite{Localization}, distributed compression \cite{distCompression}, probabilistic inference \cite{GBP} and motion tracking. A unifying theme of many of these problems is to glean  consensus information by systematically combining the data collected at individual nodes, in accordance to the structure of the network. The consensus information thus obtained characterizes the network, or the data in the network as a whole, and better represents the underlying phenomenon which can be inferred from the data at individual nodes. This reveals the fundamental nature of sensor networks: they are essentially \emph{complex networks} in which global patterns emerge from simple interactions between nodes. From an engineering perspective, the fundamental challenge in sensor network applications is to cope with the limited resources; a limited communication capability of nodes, i.e. nodes can only communicate with their neighbors, with a limited power and a limited memory. Furthermore, sensor networks are often deployed in unaccessible locations and environments where maintenance is impractical; this makes careful use of exhaustible resources such as power, imperative.\\
This unique set of circumstances motivates the use of techniques such as topological analysis. This is to directly extract global information without being overly dependent on the local structure, and thereby alleviating the excessive need for recourses. We demonstrate in this paper, the merits of such analysis by exploiting tools to solve two specific important problems: 1)A Coverage Hole detection and localization and 2)A Worm-Hole Attack detection and localization. \\
The first Problem discussed in Section \ref{sec:CoverageProblem} seeks to identify an area within a network which is not in the range (and hence uncovered) of any sensor. The second problem investigates the detection and localization of an attack called a worm-hole. This has a potential of substantially disrupting routing, localization and other tasks in a network. In the next section, we endeavor to briefly summarize the research in topological analysis and work related to the techniques presented in this paper.

\subsection{\textbf{Topological Analysis in Sensor Networks}}
Distributed algorithms for analyzing topological properties may be broadly classified into three categories: Geometric, Topological, and Statistical Methods. This categorization is based on the taxonomy presented in \cite{WangGao}, and a good overview of algorithms in these areas is presented in \cite{FayedMouftahAlpha}. \\
Topology may be described as the study of arrangement of spaces (manifolds or other data spaces) whereas Geometry may be described as the study of metrics (measures of distance) on these spaces.  This distinction characterizes the difference between distributed algorithms using topological and geometric methods. In sensor networks, the space of interest is first constructed using the node parameters followed by an analysis methodology of choice (For example, in the coverage problem in Section \ref{sec:CoverageProblem}, the space of interest is the total coverage area). We may also view the Geometric methods as a ``fine''analysis of spaces whereas Topological methods as a ``coarse'' analysis. Statistical methods rely on the aggregate statistical behavior of node parameters and try to infer the necessary information of the network by tracing the changes in these statistics. Our present work falls into the category of a Topological approach.  The distinction of these methods can be instantiated by looking at existing algorithms to solve the coverage problem. \\
In geometric methods, for example, the work in \cite{FayedMouftahAlpha} computes an $\alpha$-hull of the node positions in order to identify  the outer and inner (coverage-hole) boundaries of a network. An $\alpha$-hull of a set of points $V$ in a plane is given by the intersection of complementary regions of circles of radius $1 / \alpha$, such that no point in $V$ lies inside these circles. The complement of a circle is defined as the entire plane excluding the interior of this circle. Some other geometric methods for the coverage problem are presented in \cite{FangGao} and \cite{FayedMouftahAlpha}. \\
An example of a statistical approach for a coverage problem may be found in \cite{FeketeKroller}. It relies on the idea that nodes close to network boundaries, have fewer incident edges in the network graph than internal nodes. The authors use statistical methods to derive suitable thresholds to separate edge nodes from internal nodes using the node degrees. In \cite{FeketeKaufmann}, boundary nodes are separated from internal nodes by using a centrality measure which counts the number of shortest paths that pass through a node. A higher centrality value occurs among internal nodes. \\
In the Topological methodology, Morse theory and Algebraic Topology are the most commonly used tools.

\subsubsection{\textbf{Morse Theoretic Methods}}
\label{subsubsec:MorseTheoreticMethods}
Morse theoretic methods analyze the topology of a given topological space, more specifically a manifold, by studying differentiable functions defined on it. Consider a differentiable function $f:{M}\rightarrow \mathbb{R}$ defined on a manifold ${M}$, then the inverse image of a point in $\mathbb{R}$  is called a level set.  A principal tenet underlying these methods is the observation that critical points of this function, where topology of the level sets changes, directly reflect the underlying topological construction of the space. \\
For the coverage problem, an example of a Morse theoretic topological method is given in \cite{Funke}. The authors find boundaries of a network by studying the behavior of connected components, the nodes of which are at equi-hop distance from a randomly selected point in the network. The main observation here is that each of these components has a discontinuity at the boundaries of the network.
%as shown in Figure \ref{fig:morseTheoreticMethod}.
%
%\begin{figure}[!h]
%\centering
%\includegraphics[width=0.2\textwidth]{morseTheoreticMethod.eps}
%\caption{The curves which are equi-distant from a given point split into disconnected components at boundaries. Figure taken form \cite{Funke}}
%\label{fig:morseTheoreticMethod}
%\end{figure}

Morse theoretic methods often provide simple and efficient (in complexity) methods to analyze the underlying topology, but the greatest challenge of these methods is often the construction of an appropriate function. In addition, since the theory is mostly developed for manifolds, it requires stricter assumptions for discrete spaces such as positions of nodes in a network. For example, the work presented in \cite{Funke} will fail to provide reasonable results if the node density is small, or their distribution is non-uniform.

\subsubsection{\textbf{An Algebraic Topological Approach}}
\label{subsubsec:AlgebraicTopologicalMethods}
Algebraic topology, in contrast to a Morse theory, is  a relatively more direct technique to analyze the topology of a space which is easily expressed in terms of algebraic objects. There is an extensive literature in Algebraic topology \cite{vick,hatcher} which shows a very strong  relationship between topological spaces and their algebraic counterparts. This also enables us to draw from an extensive source of knowledge in algebra to develop fast and efficient algorithms.
The algebraic objects of choice have the following important properties: They directly reflect the topological features of an underlying space, and they are invariant to continuous deformations. We therefore follow this approach in our work owing to these advantages.

\begin{figure}
\centering
\includegraphics[width=0.5\textwidth]{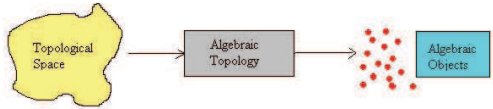}
\caption{A high level schematic of Algebraic Topology}
\label{fig:schematicAlgebraicTopology}
\end{figure}

The use of algebraic topology for the coverage problem mentioned above, was first introduced in \cite{deSilvaGhrist,RghristHomo}. The works in \cite{controlLap,distHomo} propose a distributed computation of homology groups, and \cite{DistLocHoles} attempts to localize the holes by formulating localization as an optimization problem. We further exploit the natural spatial constraint of the coverage holes, to formulate a new effective and efficient "divide and conquer" algorithm. To the best of our knowledge, \cite{DistLocHoles} is the first attempt at distributively localizing holes using an algebraic topological approach  and we compare our work on coverage holes with the results presented in there. A preliminary version of this paper was presented in \cite{SECONholeConvergence}. \\
A hole may be caused by  a) deployment error, b) some catastrophic event such as an explosion, c) presence of a jammer which disables associated nodes' communication, and thereby hiding their presence within the network. In case of deployment error, a hole localization helps in targeted redeployment, and more generally,  helps in precisely identifying the location of the cause of failure events. Routing algorithms such as geographic-routing \cite{geographicRouting} and some other distributed signal processing algorithms heavily depend on certain assumptions about the topology of the network of interest \cite{shapeSegmentation}. Having this knowledge of the overall topology may therefore be very useful.\\
Other network failures, which may have a devastating impact, include worm holes. A worm hole attack is typically launched by two colluding external attackers who do not authenticate themselves as legitimate nodes to the network. When initiating a wormhole attack, an attacker overhears packets in one part of the network, tunnels them through the wormhole link (external to the network) to another part of the network. This  effectively generates a false scenario of the presence of the original sender in the neighborhood of the remote location. An illustration of a worm-hole is given in Figure \ref{fig:wormHoleDemonstration}.

%\begin{figure}
%\centering
%\includegraphics[width=0.4\textwidth]{wormHole.eps}
%\caption{Illustration of a wormhole attack}
%\label{fig:wormHoleDemonstration}
%\end{figure}

Many routing algorithms depend on the nodes' ability to accurately discover their neighboring nodes. The nodes ordinarily perform a broadcasting beacons (including ID, and other information) to their neighbors. If the neighbor discovery beacons are tunneled through wormholes, the good nodes will get false information about their route. Although finding faulty routes is in itself a problem, worm holes can cause further critical security threats using these faulty routes. The resulting effect of wormholes on the routing is to include a worm hole link in most of the computed routes. This in turn, gives an attacker complete control of transmitting  great amounts of data, which may be selectively or completely dropped. Impacts of a wormhole on a route discovery procedure in a sensor network have been studied at length in \cite{Khabbazian,HuEvans}.

In the absence of known coordinates, as that provided for example by GPS, nodes in a sensor network depend on the positions of their neighbors to triangulate their own positions. A limited deployment of a hardware (GPS) over a few nodes would, on the other hand,  be sufficient for the entirety of the  nodes to compute their positions. Much work has been done on distributed localization in sensor networks \cite{shancangLi,JoseCosta}, and the predominant approach relies on strong correlation of geographic vicinity and communication capability of the nodes. Note that wormholes distort such correlation, and will hence adversely affect the localization algorithms. A study of impact of wormholes on localization procedures can be found in \cite{Khabbazian}.   In light of the serious impact worm holes may have on a sensor network, we propose to also naturally adapt the strategy we proposed for analyzing coverage problem to  not only detect but also localize these failures.

\subsection{\textbf{Paper Organization}}
\label{subsec:Organization}
The balance of the paper is organized as follows. In Section \ref{sec:Formalization}, we formalize both the coverage hole and the worm hole problem by a precise mathematical formulation. We subsequently provide the fundamental mathematical background necessary for topological analysis in Section \ref{sec:Background}. We provide a detailed discussion of our algorithm to localize the Coverage Hole in Section \ref{sec:CoverageProblem}, and describe its natural adaptation to the problem of the worm hole attack in Section \ref{sec:WormHoleProblem}. We conclude with some remarks in Section \ref{sec:Conclusion}.

\section{\large{\textbf{Formalization of topological network failures}}}
\label{sec:Formalization}
\subsection{\textbf{Coverage Problem}}
\label{subsec:FormalCoverageProblem}
We consider the scenario where $N$ sensor nodes are randomly deployed in a region of interest. We denote the collection of all the nodes as the set $V=\{v_i\}$. Each node $v_i$ can communicate with all the nodes within a circular neighborhood $R_c^i$ of radius $r_c^i$, and we denote these nodes as the set $\emph{N}(v_i)$, the neighbors of $v_i$. A communication graph $G=(V,E)$ is thus formed as the collection of the set $V$ together with the set of edges $E=\{(v_i,v_j)\}$ where $(v_i,v_j)\in E$, if and only if $v_i$, $v_j$ can communicate with each other. The coverage area of a sensor at each node is assumed to be a circular neighborhood $R_s^i$ or radius $r_s$ centered at the node $v_i$. Let $\Re$ denote the union of areas enclosed by the outermost boundaries of connected components of the network. The objective is to ensure that the following relation holds
\begin{equation}
\label{eqn:ProbStat}
\Re \subseteq \bigcup_i{R_c^i} = R_c,
\end{equation}
where $R_c$ is the total coverage space. This also highlights our interest in  $\Re$ being completely covered by the coverage areas of the sensors. The outermost boundary of a sensor network is to some extent in the control of the deployer, and there are algorithms which can detect this boundary \cite{BoundDetection}. As Equation (\ref{eqn:ProbStat}) suggests, we are therefore mainly interested in the coverage of the region "inside" the network. Furthermore, if the relation (\ref{eqn:ProbStat}) does not hold, our goal is to find the nodes which are closest to the boundary $\partial(\Re\setminus R_c)$ of the uncovered region. As an illustration of this problem, Figure \ref{fig:survivor1} shows a network with its communication graph and coverage area (the shaded region). The region of interest is the interior of the outermost boundary of the network. Since a part of this region is not covered by any sensor, we seek the smallest cycle in the network surrounding this coverage hole. We assume the following:

\begin{enumerate}
\item Let $Q$ be a clique in $G$, then
\begin{equation}
\label{eqn:ConvAssumption}
conv(Q) \subseteq \bigcup_{v_i \in Q} R_c^i,
\end{equation}
i.e., for any given clique $Q$ in the communication graph, the convex hull $\left(conv(Q)\right)$ of the nodes is completely covered. This assumption serves to  characterize the coverage area using the communication graph, as   further discussed  in Section \ref{sec:Background}. This can be ensured by requiring the relationship between the sensor coverage radius and the communication radius as  $r_s \geq \frac{r_c}{\sqrt{3}}$. Note that this assumption is not restrictive as the antenna power and hence the  communication radius, may be altered in order to extract the appropriate graph. This may be seen in Figure \ref{fig:survivor1}, where for any clique (for example, all the triangles), the interior is completely covered.
\item The nodes  have no  localization information.
\item There is no direction information, i.e., the nodes are unaware of  the relative orientation of their neighbors.
\item The nodes are not necessarily uniformly distributed in a given region of interest.
\end{enumerate}

%\end{document}

\subsection{\textbf{Worm Hole Problem}}
\label{subsec:FormalWormHoleProblem}
A worm-hole attack is typically launched by two colluding nodes at positions $p_1$ and $p_2$ inside a network. Denote the neighborhood regions around these points by $N_1$ and $N_2$. The two attacking nodes may receive all the packets transmitted from within their respective neighborhoods, and relay them to the other. Denote by $V_1$ and $V_2$ the sets of vertices (sensor nodes) which lie in $N_1$ and $N_2$ respectively. The result of a worm-hole attack will be to produce a complete bi-partite graph with $V_1$ and $V_2$ as the two classes of vertices. The problem of localizing a worm hole attack, hence reduces to identifying the sets $V_1$ and $V_2$. In addition to all the above assumptions pertaining to the coverage problem, we will also assume the following:

\begin{enumerate}
\item The positions $p_1$ and $p_2$ are sufficiently far apart from each other inside the network. This assumption is based on expected topology change due to a wormhole and assumed detectability (this is possible only if $N_1\cap N_2 = \phi$).
\item The positions $p_1$ and $p_2$ are not very close to the outer boundary of the network.
\item The distribution of the nodes is sufficiently dense  such that a deletion of a node in the network will not cause a significant change in the path lengths.  This assumption is not necessary for detecting a worm hole, but is important for localizing the neighborhoods $N_1$ and $N_2$.
\end{enumerate}

Figure \ref{fig:wormHoleDemonstration} shows an example of a worm hole attack. In this case, $X$ and $Y$ are the positions $p_1$ and $p_2$ and the neighborhoods $A$ and $B$ are $N_1$ and $N_2$ according to our definition. Note that in the network shown, $N_1 \cap N_2 = \phi$, which will enable us to detect the attack. The assumptions 2 and 3 are however, not valid as $p_1$ and $p_2$ are close to the outermost boundary which violates assumption 2, and there is a bottleneck in the network which violates assumption 3. The algorithm presented in Section \ref{sec:WormHoleProblem} will cause some false alarms in this case. We further elaborate  on this in Section \ref{sec:WormHoleProblem}, and show some examples where we can accurately localize the attack.

\section{\large{\textbf{Framework}}}
\label{sec:Background}
The current state of research invokes areas from Mathematics such as Topology, Homological Algebra,  Engineering and Computer science (eg. gossip algorithms in Sensor networks and Graph theory). In this section, we construct a suitable framework for our algorithm, by introducing  the necessary mathematical and computational tools. While the available literature is extensive, we focus only on the important concepts which are central and  sufficient to elucidate the implications of our algorithm.

\subsection{\textbf{Topological Analysis}}
\label{subsec:TopologyBackground}
Topological analysis \cite{munkrees} can loosely be construed as the study of global organization of spaces without paying much heed to fine geometrical structure. For a space embedded in $\mathbb{R}^3$, this amounts to analyzing properties such as, ``is the space connected?'', ``does the space wrap upon itself?'', ``does the surface have any holes?'' or ``does the surface enclose a three dimensional void?'' and so on. As such, the developed tools provide the proper generalization to study organizational features of a  network, without expending resources on finer details. This generality is concisely captured in the notions of \emph{homotopic} mappings and \emph{homotopy equivalent} spaces, which are defined as follows:

\begin{definition}
Let $X$ and $Y$ be two spaces. Two maps $f_1,f_2:X\rightarrow Y$ are said to be \textbf{\emph{homotopic}} $(f_1\approx f_2)$ to each other if $\exists$ a continuous map $F:X\times I \rightarrow Y$ (where $I = [0,1]$) such that $F(s,0)=f_1(s)$ and $F(s,1) = f_2(s)$. Such a function $F$ is called a \textbf{\emph{Homotopy}} between $f_1$ and $f_2$.
\end{definition}

\begin{definition}
Two spaces $X$ and $Y$ are said to be \textbf{\emph{Homotopy Equivalent}} if $\exists$ continuous maps $f:X\rightarrow Y$ and $g:Y\rightarrow X$ such that $f\circ g \approx \textbf{id}$ and $g\circ f \approx \textbf{id}$. Such a map $f$ is called a \textbf{\emph{homotopy equivalence}}.
\end{definition}

The above definition means that if two spaces $X$ and $Y$ are homotopy equivalent, then one can be continuously deformed into the other, and they both have the same topological features. A remarkable result from algebraic topology is that the \emph{homology spaces} which we compute (described in Section \ref{subsec:homologicalAlgebraBackground}), are invariant to homotopic mappings. This is what enables us to treat a relatively large class of spaces in a unified framework without the costly and valuable resources required for considering their exact geometry. The computation of homology spaces does not depend on the localization information of nodes. Figure \ref{fig:homotopicSpaces} shows two homotopy equivalent spaces which may be viewed as coverage areas of two different sensor network. Although, their geometry (the distribution of distances between points) is quite different (this also reflects  the location of the nodes), they have the same topological features.
\begin{figure}[!h]
\centering
\includegraphics[width=0.3\textwidth]{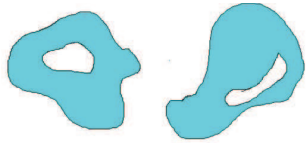}
\caption{Figure showing two homotopy equivalent spaces}
\label{fig:homotopicSpaces}
\end{figure}
We can further exploit this invariance to get homotopy equivalent representations of these spaces, using simple building blocks which, as  will be seen, are simple to manipulate. These representations are called \emph{simplicial complexes}. The simple building blocks are called simplices (simple pieces). The dimension of a simplex is represented by its order. Simplicial Complexes are representations of given topological spaces using simplices (simple pieces). A standard 0-simplex is just a point, a standard 1-simplex is a line segment, a 2-simplex a triangle and so on. A $k^{th}$ order simplex or $k$-simplex $\sigma^k$ is the set of all points given by the convex combination of $k+1$ linearly-independent points, $\sigma^k = (v_0,...v_{k})$.  Figure \ref{fig:simplices} shows simplices of order 0 through 3, and Figure \ref{fig:TopoSpaceSimplicialComplex} shows an example of representing a topological space using a simplicial complex. Note that the topology is preserved in the simplicial complex representation.
\begin{figure}[!h]
\centering
\subfigure[0-Simplex]{
\includegraphics[width=0.2\textwidth]{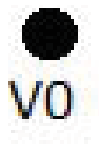}
\label{fig:simplex0}}
\subfigure[1-Simplex]{
\includegraphics[width=0.2\textwidth]{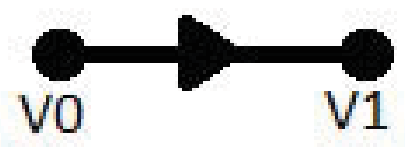}
\label{fig:simlex1}  }
\subfigure[2-Simplex]{
\includegraphics[width=0.2\textwidth]{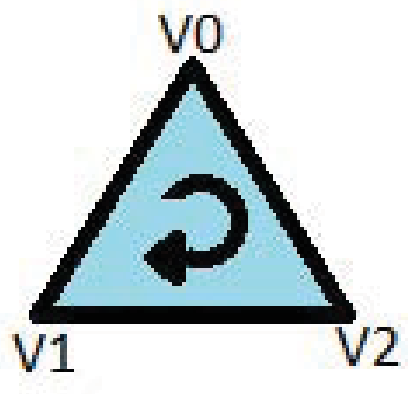}
\label{fig:simplex2}  }
\subfigure[3-Simplex]{
\includegraphics[width=0.3\textwidth]{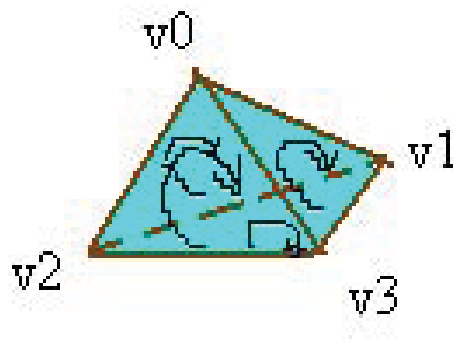}
\label{fig:simplex3} }
\caption{Simplices of order 1  to 4}
\label{fig:simplices}
\end{figure}
\begin{figure}[!h]
\centering
\subfigure[A topological space]{
\includegraphics[width=0.4\textwidth]{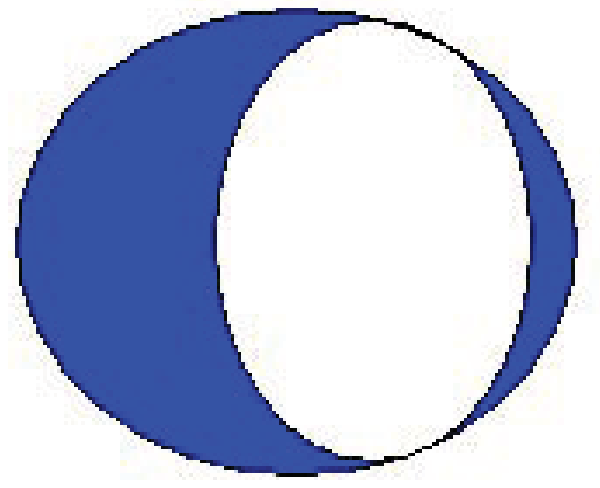}
\label{fig:ExampleTopoSpace} }
\hspace{5 cm}
\subfigure[Simplicial Complex representing the topological space]{
\includegraphics[width=0.5\textwidth]{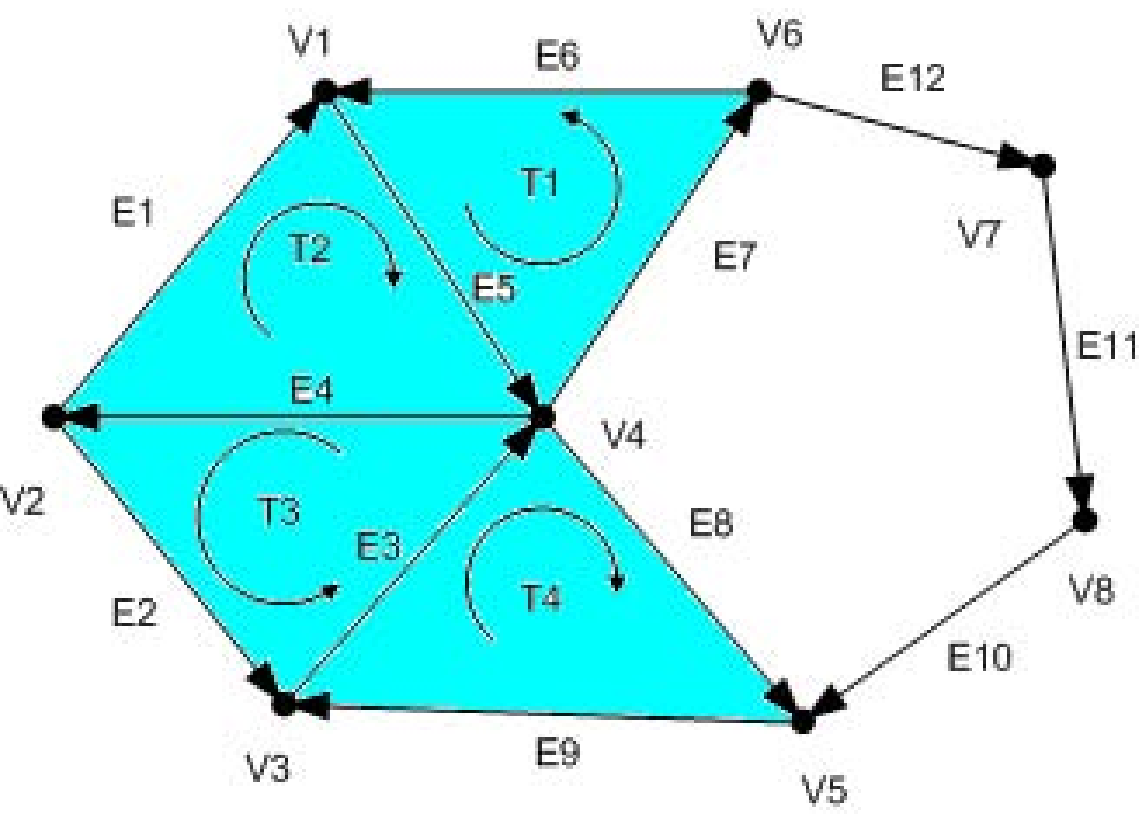}
\label{fig:homologyExample} }
\qquad \qquad
\caption{Representation of a Topological space using a simplicial complex}
\label{fig:TopoSpaceSimplicialComplex}
\end{figure}

\subsection{\textbf{Combinatrics}}
\label{subsec:combinatricsBackground}
Owing to its simple representation, a simplicial complex may be abstracted into a combinatorial object.  We can view this intermediate step as a transition from topological spaces into algebra for computing homology spaces. As each simplex is uniquely determined by specifying its vertices, given the total set of vertices $V=\{v_i\}$, a simplicial complex may then be abstractly specified as a collection of subsets $E_j \subset P(V), j = 1,2,\ldots$, where  $P(V)$ is the power set, and each $E_j$ is a collection of $j-$tuples from $V$ representing the $j-1$ simplices. Note that when we restrict $j$ to the set $\{1,2\}$, what we get is a graph. Therefore, a simplicial complex may be viewed as a generalization of a graph, and when it is homotopy equivalent to a space, it captures its topological properties. For a given $k$ simplex $\left(v_0,\ldots,v_k\right)$, we can also define an orientation by specifying the order of the vertices. We divide all possible permutations of these $k+1$ points into two classes, the elements of each class may be transformed from one to another by interchanging adjacent vertices an even number of times, giving a simplex two possible orientations. The simplices in Figures \ref{fig:simplices} and \ref{fig:TopoSpaceSimplicialComplex} show an orientation given to the simplices.\\
To get a representation of a coverage area, we can use a particular type of a simplicial complex called the \emph{nerve complex} \cite{ghristAbubakr}. Note that the coverage area  is a union of  convex sets.  Given a collection of sets $R_c = \bigcup_i{R_c^i}$ , the nerve complex (or the \^{c}ech complex) of $R_c$, $K_N\left(R_c\right)$, is the abstract simplicial complex whose $k$-simplices correspond to nonempty intersections of $k+1$ distinct elements of $R_c$. An edge in $K_N\left(R_c\right)$ exists between two vertices if and only if the corresponding elements of $R_c$ intersect. Higher dimensional
simplices are regulated by mutual intersections of collections of elements of $R_c$. Among the many uses of nerves in topology, the following classical result is perhaps of greater importance in applications:
\begin{theorem}
(The \^{C}ech Theorem): The nerve complex of a collection of convex sets has the homotopy type of the union of the sets.
\end{theorem}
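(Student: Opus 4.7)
The plan is to exploit the fact that a collection of convex sets automatically forms a \emph{good cover}, and then to build an explicit homotopy equivalence between $U := \bigcup_i R_c^i$ and the geometric realization $|K_N(R_c)|$ of the nerve. The decisive geometric input is that the intersection of any finite subfamily $R_c^{i_0}\cap\cdots\cap R_c^{i_k}$ is again convex, and any nonempty convex subset of $\mathbb{R}^n$ deformation-retracts onto any interior point via the straight-line homotopy, hence is contractible. Thus every nonempty simplex of the nerve corresponds to a contractible intersection in $U$.

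Next I would construct a continuous map $f\colon U \to |K_N(R_c)|$ using a partition of unity $\{\phi_i\}$ subordinate to the open cover $\{R_c^i\}$ (or a slight open thickening if the $R_c^i$ are only closed), by setting
\begin{equation*}
f(x) \;=\; \sum_i \phi_i(x)\, v_i,
\end{equation*}
where $v_i$ denotes the vertex of $K_N(R_c)$ associated with $R_c^i$. The point is that if $\phi_{i_0}(x),\ldots,\phi_{i_k}(x)>0$ then $x\in R_c^{i_0}\cap\cdots\cap R_c^{i_k}\neq \emptyset$, so the indices with positive weight span a genuine simplex of the nerve and $f(x)$ lies in $|K_N(R_c)|$.

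To produce a homotopy inverse I would use the standard \emph{blowup} (or Mayer--Vietoris) construction, introducing the auxiliary space
\begin{equation*}
B \;=\; \bigl\{(x,t)\in U\times |K_N(R_c)| \;:\; t_i>0 \Rightarrow x\in R_c^i\bigr\},
\end{equation*}
and showing that both projections $\pi_U\colon B\to U$ and $\pi_K\colon B\to |K_N(R_c)|$ are homotopy equivalences. The fibers of $\pi_U$ over a point $x$ are geometric realizations of the subcomplex of simplices whose sets contain $x$, which form a simplex (hence convex, hence contractible); the fibers of $\pi_K$ over a point in the open simplex indexed by $\{i_0,\ldots,i_k\}$ are exactly the intersections $R_c^{i_0}\cap\cdots\cap R_c^{i_k}$, contractible by the first paragraph. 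Composing a section of $\pi_K$ with $\pi_U$ yields the desired inverse up to homotopy.

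The main obstacle is the last step: passing from ``all fibers are contractible'' to ``the projection is a homotopy equivalence.'' For paracompact Hausdorff base spaces and numerable covers this follows from a standard gluing argument, essentially induction on the nerve combined with the gluing lemma for homotopy equivalences (or, at a higher level, from Segal's theorem identifying $B$ with the homotopy colimit of the diagram of intersections). I would carry out the induction explicitly for finite covers, which is the only case relevant to a sensor network, so that the argument stays self-contained and avoids invoking heavier machinery.
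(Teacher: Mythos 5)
The paper offers no proof of this statement at all---it is quoted as a classical result (the Nerve Theorem) and used as a black box---so there is no internal argument to compare against. Your outline is the standard textbook proof (partition of unity, the blowup space $B$, contractibility of both families of fibers, and a gluing/homotopy-colimit argument to upgrade ``contractible fibers'' to ``homotopy equivalence''), and as an outline it is correct: the fiber of $\pi_U$ over $x$ is indeed the closed simplex spanned by $\{v_i : x \in R_c^i\}$, the fiber of $\pi_K$ over an interior point of a simplex is the corresponding convex intersection, and the section $x \mapsto (x, f(x))$ built from your partition of unity is exactly what makes $\pi_U$ a deformation retraction (the straight-line homotopy in barycentric coordinates works because each fiber is a simplex). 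Two points deserve more care than you give them. First, the step you flag as the ``main obstacle'' is genuinely the whole theorem: a surjection with contractible fibers is \emph{not} a homotopy equivalence in general, so the induction over the skeleta of the nerve with the gluing lemma (or Segal's projection lemma) must actually be carried out; for the finite covers relevant here this is routine but not free. Second, the sets $R_c^i$ in the paper are closed disks, so your open thickening needs the observation that for a finite family of compact convex sets a sufficiently small $\epsilon$-neighborhood preserves the nerve (empty intersections stay empty by a compactness argument, nonempty ones stay nonempty) and that the thickened union deformation retracts onto the original; alternatively one invokes the closed-cover version of the theorem directly. With those caveats filled in, your argument is complete and is the canonical route to this result.
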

The implication of this theorem is that $K_N\left(R_c\right)$ effectively captures the topology of $R_c$. The computation of the nerve complex unfortunately requires localization information, and is very difficult even when we have it. We therefore rely on an approximate representation called the Vietoris-Rips (or Rips in short) complex, denoted by $K_p\left(R_c\right)$ which can be obtained  from only the communication graph. For extracting the Rips complex, we simply say that each $k$ clique in the communication graph is a $k-1$ simplex in the Rips Complex. Under assumption (1) in Section \ref{subsec:FormalCoverageProblem}, the Rips complex is a reasonable approximation of the underlying topological space in the sense that the number of false alarms and false negatives indicating coverage holes are very low in number. The reader is refered to \cite{RghristHomo} for examples where the Rips complex does not accurately represent the coverage area. We however maintain that this is not a limitation and as shown in \cite{ghristAbubakr}, we can always represent $R_c$ accurately using two Rips complexes using appropriate radius of communications. In particular, for a coverage radius $r_c$, the authors show that, if we have communication radii (strong and weak) $r_1$ and $r_2$ such that $2r_c = r_1 \geq \sqrt{2}r_2$,  then the rips complexes $K_p^{r_1}\left(R_c\right)$ and $K_p^{r_2}\left(R_c\right)$ satisfy the following relation:
\begin{equation}
K_p^{r_2}\left(R_c\right) \subset K_N\left(R_c\right) \subset K_p^{r_1}\left(R_c\right)
\end{equation}
The above relation implies that the topology of $R_c$ is completely captured by the two rips complexes. This is tantamount to using our algorithm twice. For the purpose of this paper, we will assume that the Rips complex obtained with condition (1) in Section \ref{subsec:FormalCoverageProblem} accurately represents the Coverage area. We will describe in the following section, an approach to infer the topological properties of $R_c$ using Algebra on $K_p\left(R_c\right)$.

\subsection{\textbf{Homological Algebra}}
\label{subsec:homologicalAlgebraBackground}
In this section, we discuss some fundamental notions of homology spaces. We subsequently relate the algebraic structure to the combinatorial structure of the Rips Complex, and demonstrate the usefulness of these spaces in inferring the existence and the cardinality of coverage holes.
\begin{definition}
A \textbf{\emph{Chain Complex}} $\{C_k,\partial_k\}$ is a sequence of vector spaces $\{C_k\}$ together with linear operators $\{\partial_k:C_k\rightarrow C_{k-1}\}$ called the boundary operators,
$$ \rightarrow C_n \overset{\partial_n}{\rightarrow}C_{n-1}\overset{\partial_{n-1}}{\rightarrow}\cdots \overset{\partial_{k+1}}{\rightarrow}C_k\overset{\partial_k}{\rightarrow}C_{k-1}\cdots \overset{\partial_1}{\rightarrow} C_0\rightarrow0 $$
with the boundary operators satisfying
\begin{equation}
\label{equ:boundaryOperatorCondition}
\partial_{k-1} \circ \partial_k = 0 \mbox{  or  } \partial^2 = 0
\end{equation}
The groups $\{C_k\}$ are called \textbf{\emph{chain spaces}} and their elements are called \textbf{\emph{chains}}.
\end{definition}
The chain complex is fundamental to homological algebra, as it provides the structure where homology spaces may be defined.
Note that since $\partial_{k-1} \circ \partial_k = 0$,  it follows that the image of one boundary operator is a subset of the kernel (or null space) of the next boundary operator, i.e.,
$$ Img(\partial_k) \subset Ker(\partial_{k-1}) $$
This observation enables us to define a homology space as follows,
\begin{definition}
\label{def:Homology}
Given a chain complex $C = \{C_k, \partial_k\}$, the $k^{th}$ \textbf{\emph{homology group}} $H_k(C)$ of the chain complex is given as
\begin{equation}
\label{equ:Homology}
 H_k(C) = ker(\partial_k)/Img(\partial_{k+1})
\end{equation}
i.e., the $k^{th}$ homology group is the quotient group formed by equivalent classes of elements in $ker\left(\partial_k\right)$, where the elements are considered equivalent if their difference lies in the subspace $Img\left(\partial_{k+1}\right)$.
\end{definition}
Of particular interest to us, is the first homology space $H_1(C)$. We will hence work only with $C_2$, $C_1$  and $C_0$. We form the chain spaces $C_2,C_1$ and $C_0$ by taking all the 2-simplices (triangles), 1-simplices (edges) and vertices respectively of the Rips Complex $K_p\left(R_c\right)$ as the basis vectors. The additive inverses in the chain spaces are given in terms of the orientation as:
\begin{equation}
    \label{equ:SimplexSign}
    \mbox{if } \sigma^k = (v_0,\ldots,v_i,v_{i+1},\ldots,v_{k}) \mbox{  then  } -\sigma^k = (v_0,\ldots,v_{i+1},v_i,\ldots,v_{k})
\end{equation}
and the boundary operator is defined in terms of $k$-dimensional simplices as:
\begin{equation}
\label{equ:BoundaryOperator}
 \partial_k(v_0,\ldots,v_k) = \sum_i{-1^i(v_0,\ldots,v_{i-1},v_{i+1},\ldots,v_k)}
\end{equation}
It is simple to check that the boundary operator so defined satisfies Equation (\ref{equ:boundaryOperatorCondition}), and we show this fact here by considering, for example, the action of $\partial_1 \circ \partial_2$ on a two simplex $(v_0,v_1,v_2)$.
\begin{equation}
\label{equ:del2iszero}
\partial_1 \circ \partial_2(v_0,v_1,v_2) = \partial_1((v_1,v_2) - (v_0,v_2) + (v_0,v_1)).
 = v_2 - v_1 -v_2 + v_0 + v_1 - v_0 = 0
\end{equation}
Using the above definitions of boundary operators and chain spaces, we can form a chain complex $C\left(R_c\right)$ using the combinatorial structure in the Rips Complex.\\
In order to understand what homology groups tell us about the topological space, we should carefully look at the action of the boundary operators.  Let us look at the null space (kernel) of $\partial_1$. Consider a cycle $c=e_1+e_2+e_3+e_3$ as shown in Figure (\ref{fig:boundaryKernelExample}) which is homotopic to a loop. The action of $\partial_1$ is given as:
$$ \partial_1(c) = v_2-v_1+v_3-v_2+v_4-v_3+v_1-v_4 = 0$$
This implies that the null space of $\partial_1$ consists of all closed cycles (chains without boundaries).  And as we saw in Equation \ref{equ:del2iszero}, the boundaries of $k+1$-simplices are closed cycles in $C_k$, and they belong to $ker(\partial_k)$.  This means that $ker(\partial_1)$ also consists of closed cycles which are boundaries of 2-simplices. But we know that 2-simplices are homeomorphic to disks or any space without any holes in them. Therefore, if we remove all the cycles which are boundaries of 2-simplices, the cycles that remain are those circling a hole. From the definition of the homology group $H_1\left(C\left(R_c\right)\right) = ker(\partial_1)/Img(\partial_2)$, it is clear that $H_1$ counts the number of holes in our topological space.We now present an example to illustrate the basic mechanism of this procedure

\begin{figure}[!h]
\centering
\includegraphics[width=0.4\textwidth]{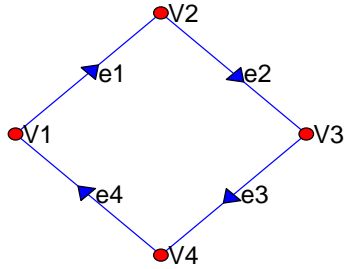}
\caption{a chain in $ker(\partial_1)$}
\label{fig:boundaryKernelExample}
\end{figure}

\subsubsection{\textbf{Example}}
\label{subsubsec:Example}
Consider the simplicial complex $X$ shown in Figure \ref{fig:homologyExample}. The orientation of the simplices (1 and 2-dimensional) are arbitrarily chosen; the homology assigned spaces  will be independent of this choice. Consider the 1-chains (paths): $c_1$, the outermost boundary, $c_2$, the closed path enclosing the triangles and $c_3$, the closed path enclosing the hole, all in clockwise orientations. These chains in terms of the basis vectors (the simplices) are expressed as
$$c_1 = E1 - E6 + E12 + E11 + E10 + E9 + E2,$$
$$c_2 = E1 - E6 - E7 + E8 + E9 - E2, \hspace{1 cm}c_3 = E12 + E11 + E10 - E8 + E7.$$
Note that Equation (\ref{equ:SimplexSign}) states that a change in the sign of a simplex changes its orientation.  Using Equation (\ref{equ:BoundaryOperator}) for the boundary operator, we can see that
\\$\partial_1(c_3) = \partial_1(E12) + \partial_1(E11) + \partial_1(E10) - \partial_1(E8) + \partial_1(E7) = V7 - V6 + V8 - V7 + V5 - V8 - (V5 - V4) + V6 - V4 = 0$
\\Similarly, any closed path (including $c_1$ and $c_2$) can be shown to belong to $ker(\partial_1)$. Again using Equation (\ref{equ:BoundaryOperator}), the action of the $\partial_2$ operator, on $T1$ for example, is given as
\\$\partial_2(T1) = E6 + E5 + E7$.
\\It should be easy to verify that $c_2$ can be expressed as $c_2 =\partial_2( T4 - T3 + T2 - T1)$ and therefore, $c_2 \in img(\partial_2)$ Note also that $c_1 - c_3 = c_2$, i.e., $c_1$ and $c_3$ differ by a chain in $img(\partial_2)$ and are therefore, homologous. In other words, they encircle the same hole. To compute $H_1(X)$, first observe that any closed path on $X$ may be expressed as a sum of the closed paths surrounding the four triangles and, that surrounding the hole. Therefore, $ker(\partial_1)$ is a vector space with 5 basis vectors, i.e., $ker(\partial_1) \cong \mathbb{R}^5$.  Also, the four closed paths generated by the action of $\partial_2$ on the four triangles (the basis vectors for $C_2$) are linearly independent and therefore, $img(\partial_2) \cong \mathbb{R}^4$. From definition (\ref{equ:Homology}) of a homology space as a quotient space, we can see that $H_1(X) \cong \mathbb{R}$. The first homology group has one generator, corresponding to one hole in the complex.

\subsubsection{\textbf{Laplacians}}
\label{subsubsec:LaplaciansBackground}
As we saw in the above example, the computation of the dimension of the $H_1$ (the first \emph{betti} number) involves computing the ranks of the operators $\partial_1$ and $\partial_2$. Such a task is computationally very expensive, and as we will show in Section \ref{subsec:coverageHoleDetection}, the precise rank of these operators is not necessary for detecting the existence of a hole. Laplacian operators provide an easier way to detect the triviality of the homology spaces. The graph laplacian  from graph theory may be generalized to the case of simplicial complexes \cite{controlLap} as
\begin{definition}
Given a chain complex $C$, the $k^{th}$ Laplacian operator $L_k:C_k\rightarrow C_k$  is defined as
$$ L_k = \partial_{k+1}\circ\partial_{k+1}^{\ast} + \partial_{k}^{\ast}\circ\partial_k $$
where $\partial_k^{\ast}$ is the adjoint of $\partial_k$
\end{definition}
It may be shown \cite{controlLap} that the kernel of the laplacian operator $L_k$ is isomorphic to the $k^{th}$ homology group, i.e., $ker(L_k) \cong H_k(C)$, and we can use the laplacian operators to infer the topology of $K_p\left(R_c\right)$. An important property of the Laplacian operators is that they are symmetric and non-negative definite.

\subsection{\textbf{Distributed Computation}}
\label{subsec:DistributedComputationBackground}
In this Section, we address issues central to sensor networks, chief among them the scaling of computation with the network size, and the implementation of related mathematical tools.\\
Owing to excessive cost of communication between nodes , gathering all the raw data at the nodes to a sink node is prohibitive. Whenever possible, distributive algorithms should be designed to reduce the demand for data collection. The power consumption during communication is in addition higher relative to that required for computations within the nodes, thus highlighting the importance for algorithms to reduce communication by with-in node computations. The use of positioning systems such as GPS or other localization algorithms, is also very expensive, emphasizing the use of localization information be avoided if at all possible. The algorithm we propose here satisfies all these basic requirements. \\
An interesting class of distributed algorithms is that of \emph{gossip algorithms}, where nodes process the data by passing messages amongst their neighbors. One particular gossip algorithm we exploit extensively, is the distributed computation of eigenvalues of sparse matrix in a network \cite{Kempe} using the orthogonal (or power) iteration method. In particular, as described in Section \ref{subsec:coverageHoleDetection}, we wish to compute the spectral radius of the first order laplacian $L_1$ of the Rips complex $K_p\left(R_c\right)$. We can extract the Rips Complex from the communication graph, distributively compute $L_1$ and its spectral radius as described in \cite{decentralizedComputationHomology}. We also develop some simple gossip algorithms and prove their efficacy in the following sections as and when required.

\section{\large{\textbf{Coverage Hole Localization}}}
\label{sec:CoverageProblem}

\subsection{\textbf{Algorithm Overview}}
\label{subsec:CovAlgoOverview}
In this section, we present a novel method to reduce the problem of \emph{locating} a coverage hole in a network into one of \emph{detecting} a hole using a "divide and conquer" mehtod. As seen in the previous section, the problem of detecting a hole in $R_c$ reduces to checking whether the first homology space of the chain complex formed from $K_p\left(R_c\right)$ is trivial, i.e., if $H_1\left(K_p\right) \cong 0$, there are no holes in the space. We subsequently and strategically divide the network into smaller partitions, and check for the presence of  holes in each of these partitions, and "drop" the partitions where there are none. As we continue this process, the partitions which survive, are the boundaries of the holes. The crux of this algorithm lies in the process of dividing the network in such a way so as to preserve the topology, i.e., neither create nor destroy holes.  Table \ref{Tab:AlgoOverview} presents an overview of this algorithm.

\begin{table}[!ht]
\label{Tab:AlgoOverview}
%\renewcommand{\arraystretch}{1.3}
%\tbl{\textbf{Hole Localization Overview}}{
\centering
\begin{tabular}{l}
\hline
for all partitions with non-trivial homology\\
\hspace{0.35in}$\setminus\setminus$ Dissection\\
\hspace{0.35in}\emph{step 1:} Find the diameter nodes.\\
\hspace{0.35in}\emph{step 2:} Find the boundary nodes and construct the partitions.\\
\hspace{0.35in}$\setminus\setminus$ Detect holes in each of the above two partitions\\
\hspace{0.35in}for the two partitions constructed\\
\hspace{0.5in}\emph{step 3:} Compute the Laplacian Matrix $L$\\
\hspace{0.5in}\emph{step 4:} Check the rank deficiency of $L$\\
repeat.\\
\hline
\end{tabular}
%}
\caption{Hole Localization Overview}
\end{table}

\subsection{\textbf{Hole Detection}}
\label{subsec:coverageHoleDetection}
In order to determine the number of holes in $R_c$, we have to compute the dimension of $ker(L_1)$. If on the other hand, the detection of a hole is only of interest, we may check whether $L_1$ is rank deficient, or in other words,  to check whether $L_1$ has a zero eigenvalue. To that end, we use the following theorem:

\begin{theorem}
\label{theo:RankDeficiency}
Let $L_1$ be a symmetric non-negative definite matrix with spectrum $\sigma(L_1)$, and spectral radius $\rho(L_1)$. Then, $L_1$ is rank deficient if  $\rho(\rho(L_1)I - L_1) = \rho(L_1)$
\end{theorem}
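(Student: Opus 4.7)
The plan is to reduce the statement to a direct eigenvalue computation via the spectral theorem, since every hypothesis on $L_1$ (symmetric, non-negative definite) is precisely what is needed to apply it.

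First, I would invoke the spectral theorem for symmetric matrices: $L_1$ admits an orthonormal eigenbasis with real eigenvalues $\lambda_1, \lambda_2, \ldots, \lambda_n$, and non-negative definiteness forces $\lambda_i \geq 0$ for every $i$. In particular, $\rho(L_1) = \max_i \lambda_i$ coincides with the largest eigenvalue (not merely its absolute value), and $L_1$ is rank deficient if and only if $\min_i \lambda_i = 0$.

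Next, I would analyze the shifted operator $M := \rho(L_1)I - L_1$. Because $I$ and $L_1$ are simultaneously diagonalizable in the eigenbasis of $L_1$, the spectrum of $M$ is exactly $\{\rho(L_1) - \lambda_i : i = 1,\ldots,n\}$. Since $\lambda_i \in [0, \rho(L_1)]$, each eigenvalue of $M$ lies in $[0, \rho(L_1)]$, so again $M$ is non-negative definite and $\rho(M) = \max_i (\rho(L_1) - \lambda_i) = \rho(L_1) - \min_i \lambda_i$.

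The hypothesis $\rho(M) = \rho(L_1)$ then becomes $\rho(L_1) - \min_i \lambda_i = \rho(L_1)$, i.e., $\min_i \lambda_i = 0$, which is exactly the statement that $L_1$ has a non-trivial kernel, hence is rank deficient. I do not anticipate a genuine obstacle: the only subtlety is making sure one uses both parts of the symmetric plus non-negative definite hypothesis, first to equate $\rho(L_1)$ with the algebraically largest eigenvalue and second to guarantee that the shifted spectrum stays in $[0,\rho(L_1)]$ so that $\rho(M)$ likewise equals its algebraically largest eigenvalue. The value of the theorem for the paper is computational rather than conceptual: $\rho(M)$ is readily accessed by distributed power iteration (as referenced just before the statement), whereas directly detecting a zero eigenvalue of $L_1$ is not.
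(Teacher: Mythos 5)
Your proposal is correct and rests on the same key observation as the paper's own proof, namely that the eigenvalues of $\rho(L_1)I - L_1$ are exactly $\rho(L_1) - \lambda_i$, so that non-negative definiteness gives $\rho\left(\rho(L_1)I - L_1\right) = \rho(L_1) - \min_i \lambda_i$. The only cosmetic difference is that the paper phrases the argument contrapositively (full rank forces a strict inequality) while you compute the equivalence directly; the substance is identical.
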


\begin{proof}
Let $\mathbf{x}$ be an eigenvector corresponding to the eigenvalue $\lambda \in \sigma(L_1)$. Then $(\rho(L_1)I - L_1)\mathbf{x} = (\rho(L_1)-\lambda)\mathbf{x}$. $\Rightarrow \, x$ is also an eigenvector of $(\rho(L_1)I - L_1)$ and its eigenvalue  is $\rho(L_1)-\lambda$. Furthermore, $L_1$ is non-negative definite $\Rightarrow \, \lambda \geq 0 \, \Rightarrow \rho(\rho(L_1)I - L_1) \leq \rho(L_1)$
if $L_1$ is of full rank, then $\lambda > 0$
 $\Rightarrow \rho(\rho(L_1)I - L_1) < \rho(L_1)$.
 \qed
\end{proof}

The spectral radius of $L_1$ can be computed using the power iteration method by searching for the largest eigenvalue which can be distributively carried out over the network \cite{Kempe}. The convergence of the power iteration method (for eigenvector) is slow when the difference between the largest and second largest eigenvalue is small, while the eigenvalue itself quickly converges to the true value. A false detection of a hole is possible  when the smallest eigenvalue is very close to zero, but this problem is unlikely to happen in successive partitions (partitions are explained in the next section). Each iteration in the power iteration method includes multiplying $L_1$ by a vector from the previous iteration, and normalizing the resulting vector. The sum of the squared elements of the vector (for normalization) can also be distributively computed  in the network by a gossip algorithm whose convergence time is of the order $\Theta\left(n\log(n)\right)$ \cite{RandomWalks}\cite{RGA}.

\subsection{\textbf{Hole Localization}}
\label{subsec:HoleLocalization}
Each element in the first homology space $H_1$ represents an equivalence class of homologous closed paths encircling a hole in the coverage space. As such, \emph{Localizing} the exact boundary of this hole is in essence a problem of finding the smallest closed path in an equivalence class. A very direct approach was proposed in \cite{DistLocHoles}, where the authors formulate the localization as an optimization problem to seek the sparsest chain in the $H_1$ space. Such an approach is effective at the cost of a very slow convergence, and involves all the nodes in the network to participate in the optimization. While the presence of holes in a coverage space is a global property, the boundary of a hole is constrained to a  relatively small part of the network. Any ability to \emph{detect} a hole in this region in noway depends on the configuration of nodes in other parts of the network. We exploit this idea to reformulate the problem of \emph{identifying} a boundary of a hole to a much simpler problem of \emph{detecting} holes.
\\We accomplish this reformulation by iteratively dissecting the network into two smaller partitions, and by detecting the presence of holes in these smaller partitions. All nodes in the partition where no hole is detected, go into  a ``\emph{sleep}'' mode and are taken out of the analysis, yielding a  a  valuable power saving. The remaining active nodes will form partitions with non-trivial homology (with holes in coverage), to get further dissected in pursuit of hole localization. We will thus be rapidly converging onto the exact boundary of the holes  with each iteration. In the first iteration, each connected component of the network graph $G$ is treated as a partition. The partitioning strategy is to minimize the "size" of the resulting partitions, while simultaneously preserving the overall topology.

\subsubsection{\textbf{Finding Diameter Nodes}}
\label{sec:FindingDiameterNodes}
Firstly, we elaborate on what we mean by ``size'' in the above description. The time required to complete steps 1 and 2 above, directly depends on the \emph{diameter} of the network partition. Step 4 utilizes a gossip algorithm whose convergence does depend on the diameter, with other factors possibly  coming into play \cite{GossipAlgo}. A network segmentation obtained by minimizing the diameter of the smaller partitions is therefore optimal for minimizing the overall run time. This is facilitated by identifying a pair of nodes called the \emph{diameter nodes} defined by
\begin{equation}
\label{eqn:diameterNodes}
\left(\bar{u},\bar{v}\right) = \arg\max_{(v_i,v_j)}\, d\left(v_i,v_j\right),
\end{equation}
where $d\left(v_i,v_j\right)$ is the shortest path between nodes $v_i$ and $v_j$ (in terms of hop count) in the partition of interest. Such a pair will generally not be unique, and ``ties'' between nodes are broken by a simple protocol which chooses the pair that has the node with the smallest ID.
\\We determine the boundary nodes in two stages; We first find  the candidate nodes $C_{dia}$ by assigning a scalar field $f(v_i)$ equal to the farthest distance for each node $x$ in the current partition, and to ultimately select the nodes with the maximum $f$;  we subsequently proceed to break the ``ties'' using the afore mentioned criterion,
\begin{eqnarray}
\label{eqn:CandidateDiameter}
f(v_i) = \max_{v_j} \, d\left(v_i,v_j\right) \nonumber\\
C_{dia} = \{v| f(v) = \max_{v_j} \, f(v_j)\}.
\end{eqnarray}
To compute $f$ on $G$, we use a simplified version of the \emph{Dijkstra's} algorithm. The simplification is a result of the following differences with  Dijkstra's, a)we do not need the shortest paths but rather just the distances and b)Instead of shortest distance from a node $v_i$ to all other nodes, we require \emph{max} of distances.

\subsubsection{\textbf{Computing the scalar field $f$}}
\label{sec:Computingf}
A summary of the algorithm for computing $f$ is given in Table \ref{Tab:DiameterNodesAlgo}. In what follows, we  provide an intuition into the mechanics of the algorithm followed by a mathematical justification.
\\It immediately follows from Equation (\ref{eqn:CandidateDiameter}) that, in order to compute $f(x)$, it is sufficient for each node $v_i$ to have the knowledge of $d\left(v_i,v_j\right)$ for all $v_j$ in $G$. Since $f(v_i)$ has to be computed for all $v_i$, the preceding statement may  equivalently be stated as; for each $v_i$, it is sufficient for all $v_j$ (all other nodes) to know $d\left(v_i,v_j\right)$. We accomplish this by broadcasting node $v_i$'s id in the network, and for each node $v_j$, $d\left(v_i,v_j\right)$ is equal to the number of hops taken by the first message arriving at $v_j$. Note that, in order for $v_i$'s id to reach all other nodes (assuming a connected Graph), it is sufficient that any other node broadcasts this information to its neighbors only once, since re-broadcasting will provide no new information. This will result in reducing the number of required broadcasts. In order to ensure that no message (id) is re-broadcast, it is sufficient for each node to remember all the messages it previously transmitted (for example, by maintaining a table). The next theorem assuages this requirement by showing that it is sufficient for a node to remember all the messages only for a limited time. This reduces the memory requirement on the nodes. We now provide a mathematical justification for the above intuitive arguments.
\\Denote as $A = \{a_{ij}\}$, the adjacency matrix for $G$, then $A^n = {a_{ij}^n}, n>0$ where $a_{ij}^n$ is the number of paths of length $n$ from $i$ to $j$ \cite{GodsilRoyle}. For simplicity, we assume that $i$ is the ID given to node $v_i$.  Now, the shortest distance from $v_i$ to $v_j$, $i\neq j$ is given by
\begin{equation}
\label{eqn:shortestDistance}
d(v_i,v_j) = \arg\min_{n>0} \, a_{ij}^n > 0, i\neq j
\end{equation}
The matrix $A^n$ can be distributively represented  in the network where node $v_i$ computes and stores the $i^{th}$ row. This can be iteratively computed  as $A^{n+1} = A\cdot A^n$, and $a_{ij}^{n+1}$ at node $v_i$ is obtained as $a_{ij}^{n+1} = \sum_{v_k\in N(v_i)}{a_{kj}^n}$. This computation is enabled by all the nodes broadcasting their row to their neighbors. If $m$ is the smallest integer such that $a_{ij}^m>0$, this implies there is no path from $v_i$ to $v_j$ of length smaller than $m$. Therefore, the node $i$ ``discovers'' node $j$ at iteration $m$ and at this instant, is a "new" node. Further, if $k\in N(i)$, this also implies $a_{kj}^{m+1}>0$ and the values of $a_{kj}^n$ for $n>m+1$ are irrelevant from the perspective of computing $d\left(v_k,v_j\right)$. We therefore refrain from broadcasting $a_{ij}^m$ for $n>m$ time intervals. In other words, each node broadcasts the information about a new node it discovers only once. Note that in so doing, we do not actually compute $A^n$ at the $n^{th}$ iteration, but an estimate $\hat{A}^n$ with the property that the smallest integer $m$ for which $\hat{a}_{ij}^m > 0$ is that for $a_{ij}^m>0$. The table used, acts as  a reference to avoid transmitting duplicate information to its neighbors. Here, it appears that the memory required at each node will be equal to the number of nodes in the partition, as all the nodes will eventually be  discovered. We maintain that it suffices to store a node in the table for only two iterations.\\
\begin{theorem}
\label{theo:StorageLength}
A node $v_i$ storing the information about the node $v_j$ for two iterations, guarantees no duplicate information is broadcasted.
\end{theorem}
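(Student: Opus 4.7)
The plan is to combine two simple ingredients: the algorithmic rule that each node broadcasts any given identifier at most once, and the triangle inequality in the hop-distance metric, which bounds how far out of step a neighbour's discovery of $v_j$ can be relative to $v_i$'s discovery of $v_j$. Together these pin down a small finite window after $v_i$'s discovery inside of which $v_i$ could possibly receive another broadcast of $v_j$.

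First, I would make the broadcast rule explicit: a node $v_k$ transmits the identifier of $v_j$ precisely at the iteration when $\hat{a}_{kj}$ first becomes positive, i.e., at iteration $d(v_k, v_j)$, and never again thereafter. This is the ``broadcast a newly discovered node only once'' convention already established in the paragraph above the theorem.

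Next, I would invoke the triangle inequality in the graph metric. Since $v_k \in N(v_i)$ means $d(v_i, v_k) = 1$, we have
\[
\bigl| d(v_k, v_j) - d(v_i, v_j) \bigr| \leq 1.
\]
Writing $m = d(v_i, v_j)$, every neighbour $v_k$ of $v_i$ satisfies $d(v_k, v_j) \in \{m-1, m, m+1\}$, and can therefore broadcast $v_j$'s ID toward $v_i$ only at one of these three iterations. In particular, no neighbour will ever broadcast $v_j$ to $v_i$ at any iteration strictly greater than $m+1$. Since $v_i$ itself first enters $v_j$ into its table at iteration $m$, the sole subsequent moment at which $v_i$ could be fooled into treating $v_j$ as new and rebroadcasting is iteration $m+1$. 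Retaining $v_j$ in the table for the two consecutive iterations $m$ and $m+1$ therefore blocks every duplicate broadcast, after which the entry may be evicted without risk.

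The main care needed is in fixing a consistent iteration-indexing convention, so that ``$v_k$ broadcasts at iteration $n$'' and ``$v_i$ processes and updates its table at iteration $n$'' line up correctly with the recursion $a_{ij}^{n+1} = \sum_{v_k \in N(v_i)} a_{kj}^n$ used to define the discovery iteration; once that bookkeeping is pinned down, the triangle-inequality bound closes the argument in essentially one line, and also shows that two is the worst-case optimum, since a neighbour with $d(v_k,v_j)=m+1$ genuinely does arrive at iteration $m+1$.
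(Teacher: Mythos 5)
Your argument is correct in substance but takes a genuinely different, more local route than the paper. The paper argues by contradiction on two paths: if $v_i$ rediscovered $v_j$ at iterations $m$ and $m+t$ with $t>2$, there would be paths of lengths $m$ and $m+t$ from $v_j$ to $v_i$; the echo of $v_i$'s iteration-$m$ broadcast travels backwards along the longer path $Q$ and meets the forward-travelling copy near the midpoint of $Q$ within two iterations, so that meeting node's table still holds $v_j$ and it refuses to forward, killing the late duplicate. You instead look only at $N(v_i)$ and use the triangle inequality $\left| d(v_k,v_j)-d(v_i,v_j)\right| \le 1$ to conclude that every copy of $v_j$'s id that can ever reach $v_i$ arrives within two iterations of the first one. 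Your version is shorter, avoids the paper's delicate midpoint bookkeeping along $Q$, and immediately shows that a retention window of two is tight (a neighbour at distance $m+1$ really does deliver a copy at the last admissible moment).

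One point you must make explicit, though: your first ``ingredient,'' that each neighbour $v_k$ broadcasts $v_j$'s id exactly once, at iteration $d(v_k,v_j)$ and never again, is not an unconditional algorithmic rule. Under the two-iteration memory it is precisely the statement being proved, applied to $v_k$ instead of $v_i$; as written the argument is circular. The fix is routine: run a simultaneous induction on the iteration number $n$ over all nodes, with the hypothesis that through iteration $n$ every node has first discovered each $v_j$ exactly at its hop distance and has broadcast it only at that iteration. The inductive step is then exactly your triangle-inequality computation: any copy arriving at $v_k$ at iteration $n+1$ was sent by a neighbour at distance $n$ from $v_j$, so $d(v_k,v_j)\ge n-1$, and the table entry (if one already exists) was created at most two iterations earlier and is still present, so no rebroadcast occurs. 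The paper's own proof silently relies on the same invariant for the midpoint node of $Q$, so it needs the same ``consider the first violation'' scaffolding; your proof just makes that dependence easier to see and to discharge.
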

\begin{proof}
By contradiction.
\\Duplicate information will be broadcast if a node $v_i$ discovers a node $v_j$ at iterations $m$ and $m+t, t>2$. This means that there are two paths $P = (j,p_1,\ldots,p_{m-1},i)$ and $Q = (j,q_1,\ldots,q_{m+t-1},i)$. At the $m^{th}$ iteration, node $v_i$ will start a broadcast which propagates along $Q$ in the reverse direction and meets the message coming along $Q$ at node $q_{m+t_1}=q_{m+t-t_1}$.   $\exists t_1 = (t-1)/2 \mbox{ or } (t-2)/2$, whichever is an integer such that $(m+t-t_1)-(m+t_1)=1\mbox{ or  } 2$. The message from $v_j$ would take the path $Q$ only if node $q_{m+t_1}$ broadcast it, thus violating the rule because $v_j$ is already in the table at that instant.
\qed
\end{proof}
If $n_{max}$ is the largest distance for which a node $v_i$ discovers a new node, then we set $f(v_i)=n_{max}$.

\subsubsection{\textbf{Diameter nodes}}
\label{sec:DiamterNodes}
Once $f$ is computed, candidate diameter nodes are found by consensus for maximizing $f$ on the network by a simple gossip algorithm. There are many algorithms in the literature for computing such aggregates on the network, for example \cite{Aggregates}. The essence of such algorithms is that at each iteration, if a node "discovers" a new max value, it broadcasts this discovered value to all its neighbors.  Similarly, the diameter nodes are obtained from the candidate nodes by consensus for a minimum of node IDs.

\begin{table}
\label{Tab:DiameterNodesAlgo}
%\tbl{\textbf{Finding Diameter Nodes} (left) and \textbf{Boundary Nodes} (right)}{
\centering
\makebox[0.4\textwidth][l]{
\begin{tabular}{l}
\hline
\bfseries At each Node $i$ in the segment\\
\hline
$\setminus\setminus$ \textbf{Computing $f$}\\
$\setminus\setminus$ Initialization: Discover itself\\
add $v_i$ to table and broadcast to $N(i)$\\
$\setminus\setminus$ run time\\
at iteration n:\\
\hspace{0.25in}$\setminus\setminus$check for new nodes discovered\\
\hspace{0.25in}\emph{if} found new nodes\\
\hspace{0.5in}broadcast new nodes to $N(i)$\\
\hspace{0.5in}add new nodes to table\\
\hspace{0.5in}clear values of n-2 iteration\\
\hspace{0.25in}\emph{else}\\
\hspace{0.5in} $f(v_i)=n$\\
\hspace{0.5in} stop.\\
\hline
\end{tabular}
}
\\
%\hspace{2 cm}
\makebox[0.4\textwidth][r]{
\begin{tabular}{l}
\hline
\bfseries At each Node $i$ in $V_X$\\
\hline
$\setminus\setminus$Initialization\\
\emph{if} $v_i$ is a diameter node\\
\hspace{0.25in} broadcast $i$ to $N(i)$. stop.\\
\hspace{0.25in} ($i$ will serve as the segment ID)\\
\emph{else}\\
\hspace{0.25in} wait until reception\\
\hspace{0.25in} \emph{if} received two distinct IDs\\
\hspace{0.5in} broadcast the lowest received ID to $N(i)$.\\
\hspace{0.5in} $v_i$ = boundary node.\\
\hspace{0.25in} wait one time interval\\
\hspace{0.25in} \emph{if} received two distinct IDs overall\\
\hspace{0.25in} $v_i$ = boundary node.\\
stop.\\
\hline
\end{tabular}
}
%}
\caption{\textbf{Finding Diameter Nodes}  and \textbf{Boundary Nodes} }
\end{table}

\subsubsection{\textbf{Finding Boundary Nodes}}
\label{sec:FindingBoundaryNodes}
As the physical positions of the nodes do not change, we form a virtual segmentation by finding boundary nodes $B = \{b_i\}$ within a partition which stop messages from passing through. This effectively separates a given partition into two parts with non-intercommunicating nodes. For a set $B$ to behave like a boundary \footnote{This definition of a boundary should not be confused with the conventional notion, confounded with the closure of a graph or a region. The particular definition we are using will be clear from the context.}, it has to satisfy certain properties:
\begin{definition}
Let $X=(V_X,E_X)\subseteq G$ be a connected sub-graph. The set of nodes $B$ is said to be a boundary in $X$, if and only if $\exists$ two disjoint sets $V_{X1},V_{X2}\subset V_X$ such that there is a node $b_i\in B$ in any path $(v_i,\ldots,v_j)$, where $v_i\in X_1$ and $v_j \in X_2$. Furthermore, $V_{X1}\cup V_{X2} \cup B = V_X$.
\end{definition}
If every path from $V_{X1}$ to $V_{X2}$ contains a boundary node, this means there is no path along which a message from $V_{X1}$ can reach $V_{X2}$, thus virtually separating both. This justifies the above definition for the boundary. The boundary nodes identify their neighbors as belonging to $V_{X1}$ or $V_{X2}$, and do not transmit messages from one to the other. \\
To minimize the diameter of the resulting partitions ($S_1=V_{X1}\cup B$ and $S_2=V_{X2}\cup B$), we choose the boundary nodes to be equidistant from the determined diameter nodes. This will cause  the boundary nodes to bisect the diameter of $X$. These equidistant nodes are obtained using a simple flooding algorithm which is presented in Table \ref{Tab:DiameterNodesAlgo}. The basic idea is to start a flood from both  diameter nodes, and determine the boundary nodes where these floods meet. Every node will either belong to $S_1$ or  to $S_2$ since $X$ is connected and therefore, $S_1 \cup S_2 = X_1 \cup X_2 \cup B = X$. Let the diameter points be $x_1$ and $x_2$, and let $v_1 \in X_1$ and $v_2 \in X_2$. This implies $v_1$ and $v_2$ received a single ID, ID$(x_1)$ and ID$(x_2)$ respectively. It follows that for any path $p = (v_1,\ldots,v_2), \exists v_i \in p$ such that $v_i$ received both  IDs and hence belong to $B$. This shows that the nodes obtained as in Table \ref{Tab:DiameterNodesAlgo}, indeed satisfy the definition of the boundary.\\

\begin{figure}[!h]
\centering
\includegraphics[width=.6\textwidth]{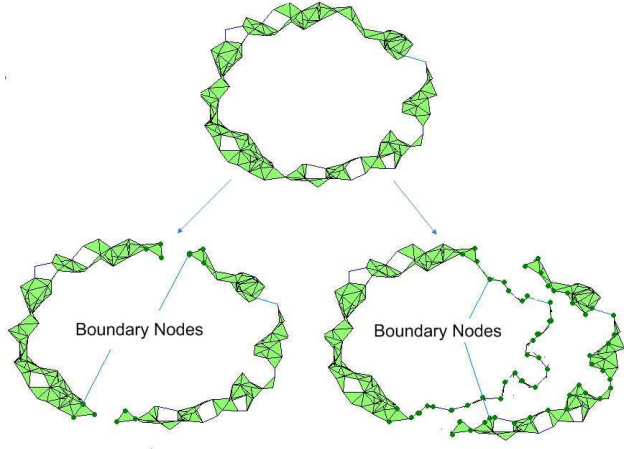}
\caption{The partitioning on the left does not preserve topology whereas that on the right does.}
\label{fig:topologyPreserve}
\end{figure}
An additional and very important property that a boundary partitioning should satisfy, is that  it should preserve the topology of the original entity (see Figure \ref{fig:topologyPreserve}). Specifically, if $X$ has no holes in its coverage, then neither $S_1$ nor $S_2$ should; and if $X$ has a hole, then it should be preserved in either one of the partitions. Theorem \ref{theo:TopologyPreservation} shows that a sufficient condition for preserving the  topology is the contractibility of the Rips complex obtained from an induced subgraph on the boundary nodes $B$. The Rips complex for $X$ is obtained by taking all the cliques as simplices, and similarly for the subgraphs induced on $S_1$, $S_2$ and $B$.\\

\begin{theorem}
\label{theo:TopologyPreservation}
Let $X$ be a Simplicial complex and $A,B\subset X$ be sub complexes such that $A\cap B$ forms a boundary on the underlying graph. Then $H_1(X) = H_1(A)\oplus H_1(B)$ if $A\cap B$ is contractible, i.e., $H_0(A\cap B) = \mathbb{R}$ and $H_1(A\cap B) = 0$.
\end{theorem}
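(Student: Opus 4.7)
The plan is to invoke the reduced Mayer--Vietoris long exact sequence for the cover $X = A \cup B$, which is the natural algebraic-topological tool for splitting the homology of a union along an intersection. Since $A$ and $B$ are subcomplexes of $X$ whose union is $X$ (this is implicit in the setup: $A$ and $B$ are the two partitions $S_1$ and $S_2$ with $S_1 \cup S_2 = X$), their simplicial chain complexes fit into the short exact sequence
\begin{equation*}
0 \to C_*(A\cap B) \xrightarrow{c \mapsto (c,-c)} C_*(A) \oplus C_*(B) \xrightarrow{(a,b)\mapsto a+b} C_*(X) \to 0,
\end{equation*}
which induces the usual long exact sequence in homology.

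Next I would extract the relevant segment, written in reduced homology for convenience:
\begin{equation*}
\tilde H_1(A\cap B) \to H_1(A)\oplus H_1(B) \to H_1(X) \to \tilde H_0(A\cap B) \to \tilde H_0(A)\oplus \tilde H_0(B).
\end{equation*}
Inserting the contractibility hypothesis $\tilde H_1(A\cap B) = \tilde H_0(A\cap B) = 0$ collapses both outer terms and leaves the short exact sequence
\begin{equation*}
0 \to H_1(A)\oplus H_1(B) \to H_1(X) \to 0,
\end{equation*}
whose middle arrow is therefore an isomorphism. This gives the claimed splitting $H_1(X) \cong H_1(A) \oplus H_1(B)$.

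The main obstacle I anticipate is not the exact-sequence manipulation (which is mechanical once set up) but the combinatorial verification that the Mayer--Vietoris short exact sequence is genuinely available in this setting. Specifically, one must check that the map $(a,b)\mapsto a+b$ is surjective onto $C_*(X)$, i.e., that \emph{no simplex of $X$ straddles the partition}: every simplex $\sigma=(v_0,\ldots,v_k)$ of $X$ must have all its vertices in $V_{X1}\cup B$ or all in $V_{X2}\cup B$. If instead $\sigma$ contained vertices $v_p \in V_{X1}$ and $v_q \in V_{X2}$, the pair $(v_p,v_q)$ would be an edge in the underlying communication graph directly connecting $V_{X1}$ to $V_{X2}$ without passing through a boundary node, contradicting the defining property of the boundary set established in the preceding definition. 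Once this combinatorial lemma is in hand, the short exact sequence above is well defined, and the remainder of the argument reduces to the exactness bookkeeping already outlined.
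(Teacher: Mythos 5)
Your proof is correct, and it is a cleaner variant of the paper's argument rather than an identical one. The paper also starts from the Mayer--Vietoris sequence, but only uses the segment $H_1(A\cap B)\to H_1(A)\oplus H_1(B)\to H_1(X)$ to get injectivity of $\psi$ from $H_1(A\cap B)=0$; for surjectivity it abandons the exact sequence and gives a hands-on combinatorial argument, splitting an arbitrary $1$-cycle $c$ of $X$ at two boundary vertices $b_1,b_2\in A\cap B$ and closing up the two halves along a path in the connected set $A\cap B$, so that $c=c_1+c_2$ with $c_1\in C_1(A)$ and $c_2\in C_1(B)$. You instead extend the sequence one more term to $\tilde H_0(A\cap B)=0$, which kills the cokernel of $\psi$ and delivers surjectivity for free; the connectedness hypothesis enters through $\tilde H_0(A\cap B)=0$ in your version and through the existence of the connecting path $b_1\to b_2$ in the paper's, so the two arguments are the same mechanism seen at different levels of abstraction. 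A genuine point in your favor is that you explicitly verify the hypothesis that makes Mayer--Vietoris available at all, namely that $(a,b)\mapsto a+b$ is surjective on chains because no simplex of $X$ straddles $V_{X1}$ and $V_{X2}$ (any such simplex would contain an edge from $V_{X1}$ to $V_{X2}$ avoiding $B$, contradicting the definition of a boundary); the paper simply asserts the existence of the exact sequence for arbitrary subcomplexes $A,B\subset X$ without this check, even though the check is precisely where the boundary hypothesis is consumed on the injectivity half of the argument.
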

\begin{proof}
For any simplicial complex $X$, and $A,B \subset X$, $\exists$ the following exact sequence called the Meyer-Vietoris Exact sequence.
$$H_1(A\cap B) \overset{\phi}{\rightarrow} H_1(A) \oplus H_1(B) \overset{\psi}{\rightarrow} H_1(X)$$
\\where $\phi$ and $\psi$ are linear operators. Now,
\\$H_1(A\cap B) = 0 \Rightarrow img(\phi) = 0 \Rightarrow ker(\psi) = 0 \Rightarrow$ $\psi$ is injective since it is linear. Therefore, $H_1(A) \oplus H_1(B) \subseteq H_1(X)$
\\Let $c\in ker(\partial_1(C_1^X))$ be a chain in the null space of the first boundary operator acting on the first chain space of $X$ ($c$ is a closed path), such that it contains $v_1\in A$ and $v_2 \in B$. $\exists b_1, b_2 \in c, b_1 \neq b_2$ such that $b_1,b_2$ also $\in A\cap B$ since $A\cap B$ is a boundary. (See Figure \ref{fig:digramForTheorem}). Now, since $A\cap B$ is connected, $\exists$ a chain corresponding to the path $b_1\rightarrow b_2$. Consider the two chains corresponding to closed paths $c1 := (v_1\rightarrow b_1 \rightarrow b_2 \rightarrow v_1)$ and $c2 := (v_2\rightarrow b_2 \rightarrow b_1 \rightarrow v_2)$. It immediately follows that $c1 + c2 = c$. Therefore, any chain in $ker(\partial_1(C_1^X))$ can be expressed as a sum of chains in $ker(\partial_1(C_1^A))$ and $ker(\partial_1(C_1^B))$, $\Rightarrow H_1(X) \subseteq  H_1(A) \oplus H_1(B)$.
\\ $\Rightarrow H_1(X) = H_1(A) \oplus H_1(B)$.
\qed
\end{proof}

\begin{figure}[!h]
\centering
\includegraphics[width=0.5\textwidth]{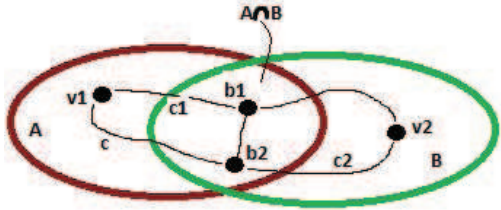}
\caption{Figure illustrating a chain in $X$ can be represented as a sum of chains in $A$ and $B$ where $A,B \subseteq X$ and $A\cap B$ is connected}
\label{fig:digramForTheorem}
\end{figure}

The first part of the theorem states that no new holes are created by the partitioning, and the second states that all the holes are preserved. If the boundary nodes obtained by the algorithm given in Table \ref{Tab:DiameterNodesAlgo} are not connected, we can form a tree by joining different connected components by a shortest path between them. This shortest path can be discovered by a simple flooding in the network originating at the connected components. The boundary obtained is also usually contractible, aside from one exception. As shown in Figure \ref{fig:exceptionFigure}, this happens exactly when $d(x_1,v_1) = d(x_1,v_2) = d(x_2,v_3) = d(x_2, v_4)$ in the given configuration, where $x_1$ and $x_2$ are the diameter nodes. In this case, all the nodes $v_1,v_2,v_3,v_4$ will be made boundary nodes. Note that we use the contractibility condition only to prove that no new holes are created, which is clearly also valid in this case.

\begin{figure}[!h]
\centering
\includegraphics[width=.4\textwidth]{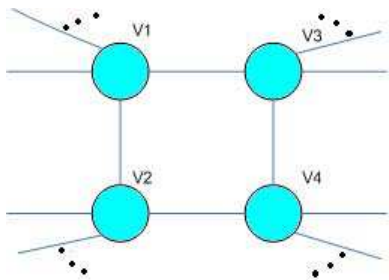}
\caption{Exception case when $B \ni v_1,v_2,v_3,v_4$ is not contractible}
\label{fig:exceptionFigure}
\end{figure}

Once we found these boundary nodes, we can proceed and partition the network into two. We subsequently compute the Laplacian matrix, and check for rank deficiency as described in Section \ref{subsec:coverageHoleDetection}.

\subsection{\textbf{Complexity Analysis}}
\label{subsec:CovComlexity}
Algorithms in sensor networks depend on variety of factors such as inter-node communication, in-node processing, memory requirements and run time. These carry different costs depending on the context, and the communication  cost  is almost always dominant. We conduct a complexity analysis accounting for these pertinent points. The complexity also depends on the spatial arrangement of the nodes. For simplicity, we assume that the region of deployment is convex. We also focus on an average cost per node rather than the cost of the entire network. Most of the complexity of the detection/localization algorithm (and therefore the bottlenecks) depends on three factors 1) Evaluating the function $f$, 2) Finding $\max(f)$ and 3) Finding spectral radius of the Laplacian. Furthermore, since each iteration of the partitioning procedure sees half of the surviving nodes removed, the average cost per node primarily depends on the first iteration.

\subsubsection{\textbf{Communications}}
\label{subsubsec:Communicaitons}
For evaluating $f$, each node discovers every other node at some point and broadcasts the information to its neighbors. Each node broadcasts the discovery precisely once for every other node. As a result, the complexity per node for evaluating $f$ is $o(n)$, where $n$ is the number of nodes.  Evaluating the complexity for determining $\max(f)$ is rather peculiar since the behavior of the node depends on value of $f$ at that node. Recall from Section \ref{sec:DiamterNodes} titled ``Diameter nodes'',  that if a node with a function value higher than any previously recorded value is encountered, this information is broadcast. The nodes with the highest value of $f$ for example, never broadcast anything during this part of the algorithm. In order to evaluate the complexity in this case, we consider a simple case where nodes are deployed in a circular region. The radius of this circle will be $\rho \propto \sqrt{n}$. In this case, the nodes which lie on the circle of radius  $r$, (see Figure \ref{fig:diagramForComplexity}) will broadcast the discovery of exactly $\rho -r$ nodes which have a higher $f$ value than previously recorded.
\begin{figure}
\centering
\includegraphics[width=0.4\textwidth]{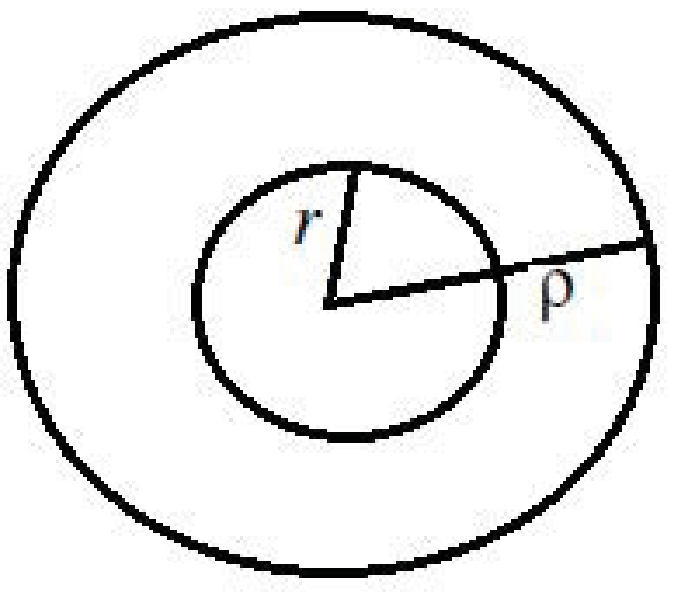}
\caption{Figure illustrating the simple case for assessing complexity}
\label{fig:diagramForComplexity}
\end{figure}

 The number of broadcasts over the entire network would end up as
$$ b = 2\pi\sum_{i=1}^{\rho}{i\left(\rho-i\right)}  = \frac{\rho^2\left(\rho+1\right)}{2}-\frac{\rho\left(\rho+1\right)\left(2\rho+1\right)}{6} = \frac{\rho\left(\rho-1\right)\left(\rho+1\right)}{6} \propto \rho^3 \propto n^{3/2}.$$
The average complexity per node for evaluating $\max(f)$ is therefore $o\left(\sqrt{n}\right)$. Figure \ref{fig:complexityDiameter} shows a $\log-\log$ relation between the number of nodes $n$ and the number of memory words broadcast for finding the diameter nodes in the first partition. For each value of $n$, we averaged over 5 networks. A linear regression (line in blue) shows a slope of $0.9\approx 1$ confirming the dominant effect of evaluating $f$ at $o(n)$ cost.  The complexity for finding the spectral radius of $L_1$  will be proportional to the mean \emph{degree} of the nodes (as the number of values broadcast in the power iteration method will be proportional to the number of neighbors) and depend logarithmically on the ratio $\alpha_1/\alpha_2$ where $\alpha_1$ and $\alpha_2$ are the first and second largest Eigen values. The difficulty of apriorily estimating these Eigen values  for a random matrix, will complicate this ratio as an explicit function of $n$. We therefore provide some numerical results shown in Figure \ref{fig:complexityLocalizeHoles}. This figure compares the total number of memory words broadcast for detecting a hole (evaluating spectral radius of $L_1$ and $\alpha I - L_1$) in our algorithm, with those required for localizing a  hole by an $l_1$ norm minimization as presented in \cite{DistLocHoles}.

\begin{figure}
\centering
\subfigure[Complexity for finding diameter nodes]{
    \label{fig:complexityDiameter} \includegraphics[width=0.3\textwidth]{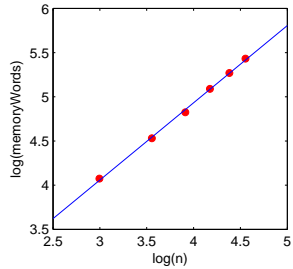}
    }
    \hspace{2 cm}
\subfigure[Complexity for localizing holes]{
    \label{fig:complexityLocalizeHoles} \includegraphics[width=0.3\textwidth]{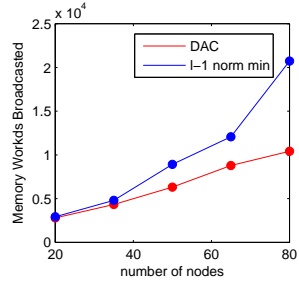}
    }
\caption{Complexity analysis for (a)finding diameter nodes (b)localizing holes. (b) compares the 'divide and conquer' method with $l_1$ optimization in \cite{DistLocHoles}}
\label{fig:ComplexityAnalysis}
\end{figure}

\subsubsection{\textbf{Memory}}
The only bottle neck in our algorithm for memory requirement lies in estimating the diameter points of the partition. At the $i^{th}$ time step during this process, a node discovers all the other nodes in the partition which are at a  $i-$ hop distance. The node keeps this information for 2 time steps, and deletes it. The iso-distance paths on the network from any node are on the order of $o\left(\sqrt{n}\right)$, translating this into the memory requirement for the algorithm.

\subsubsection{\textbf{Run Time}}
First, note that the number of partitions required to converge on to a hole is related to the number of nodes as $o\left(\log(n)\right)$. The time required for finding both the diameter nodes and the boundary nodes is directly proportional to the diameter of the network, i.e., $o\left(\sqrt{n}\right)$. The number of iterations required for the power iteration  method (for computing spectral radius of the Laplander) to converge, similar to its communication cost, is of the order $o\left(\log\left(\alpha_1/\alpha_2\right)\right)$. In each iteration of the power method, finding the sum (for normalizing) requires $\Theta\left(n\log(n)\right)$ time \cite{RandomWalks}\cite{RGA}.

\subsection{\textbf{Simulation Examples}}
\label{sebsec:CovSimulations}
Figure \ref{fig:PartitionProcess} shows the algorithm on a random network with 50 nodes. Figure \ref{fig:survivor1} shows the communication graph superimposed on the coverage area. In the first partition, the boundary nodes are indicated by the red circles and the diameter nodes are indicated in black. The boundary nodes dictate where the partition occurs and as shown in Figure \ref{fig:survivor2}, all the nodes in the partition which do not enclose a hole are no longer considered. An important point is that as the algorithm progresses, additional nodes are put to rest saving valuable power. IN the end, only the cycle closest to the coverage hole survives, providing a good indication of where the failure took place.

\begin{figure}
\centering
\subfigure[Communication Graph superimposed on the coverage area]{
    \label{fig:survivor1}\includegraphics[width=0.3\textwidth]{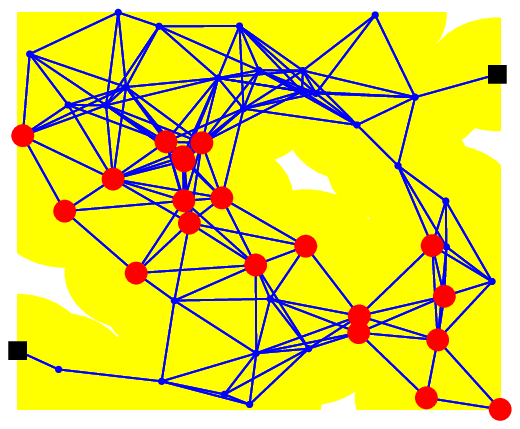}
    }
\subfigure[after partition 1]{
    \label{fig:survivor2}\includegraphics[width=0.3\textwidth]{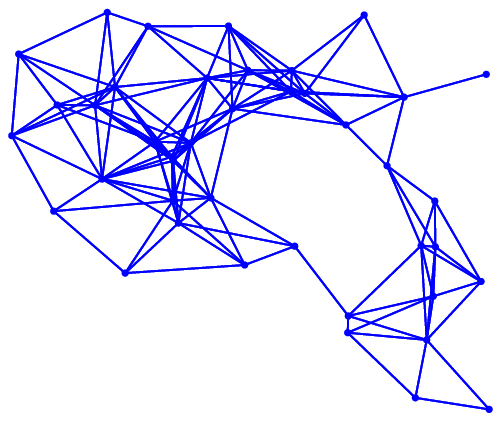}
    }
\subfigure[after partition 2]{
    \label{fig:survivor3}\includegraphics[width=0.3\textwidth]{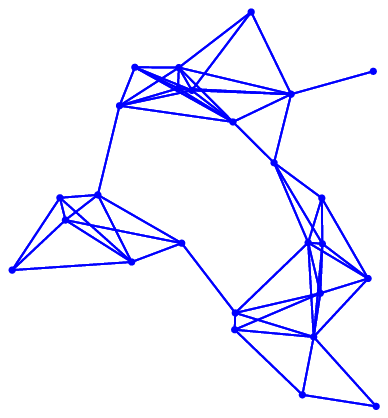}
    }
\subfigure[after partition 3]{
    \label{fig:survivor4}\includegraphics[width=0.3\textwidth]{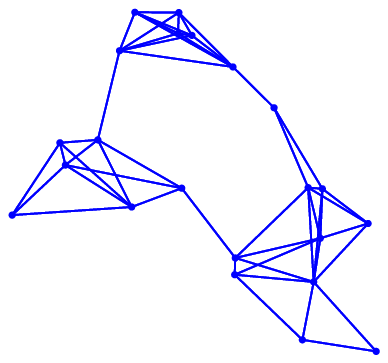}
    }
\subfigure[after partition 4]{
    \label{fig:survivor5}\includegraphics[width=0.3\textwidth]{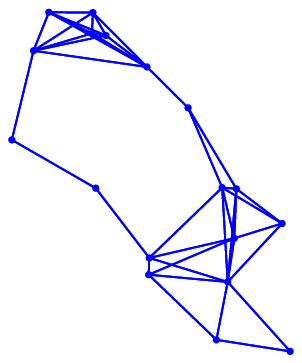}
    }
\subfigure[after partition 5]{
    \label{fig:survivor6}\includegraphics[width=0.3\textwidth]{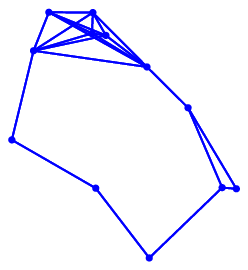}
    }
\subfigure[after partition 6]{
    \label{fig:survivor7}\includegraphics[width=0.3\textwidth]{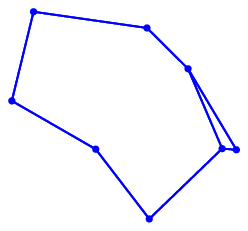}
    }
\subfigure[after partition 7]{
    \label{fig:survivor8}\includegraphics[width=0.3\textwidth]{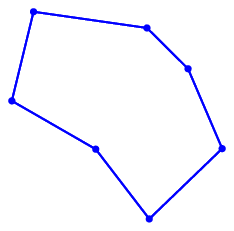}
    }
\caption{Figure showing the sequence of surviving supgraph with each partition}
\label{fig:PartitionProcess}
\end{figure}

\section{\textbf{Worm Hole Problem}}
\label{sec:WormHoleProblem}
A worm hole attack is launched by two colluding external attackers which do not authenticate themselves as legitimate nodes to the network. When starting a wormhole attack, one attacker overhears packets at one point in the network, tunnels these packets through the wormhole link (external to the network) to another point in the network. This generates a false scenario that the original sender is in the neighborhood of the remote location. An example of a worm-hole attack is shown in Figure \ref{fig:wormHoleDemonstration}. In this Section, our aim is to  first show the methodology of detecting, if such an attack is taking place, and if so, to locate the attack positions. By way of a simple observation, we show that the algorithm  to find a coverage hole, may be extended to address this problem.

\subsection{\textbf{Worm Hole Detection}}
\label{subsec:WormHoleDetection}
Because a worm hole links  geographically separated positions in the network, it essentially creates a cycle in the network which cannot be a boundary of a 2-simplex. It thus creates a non-zero homology component. Figure \ref{fig:WormHole3D} shows a network with a worm hole link, and the resulting deformation of the network structure which yields a cycle. We have already seen in Section \ref{sec:CoverageProblem} how to localize this cycle. It is hence clear that a presence of a worm-hole in the network, would be followed by a localization of the shortest cycle it creates.

\begin{figure}[!h]
\centering
\subfigure[network grid with links caused because of a worm-hole]{
    \label{fig:wormHoleGrid} \includegraphics[width=0.5\textwidth]{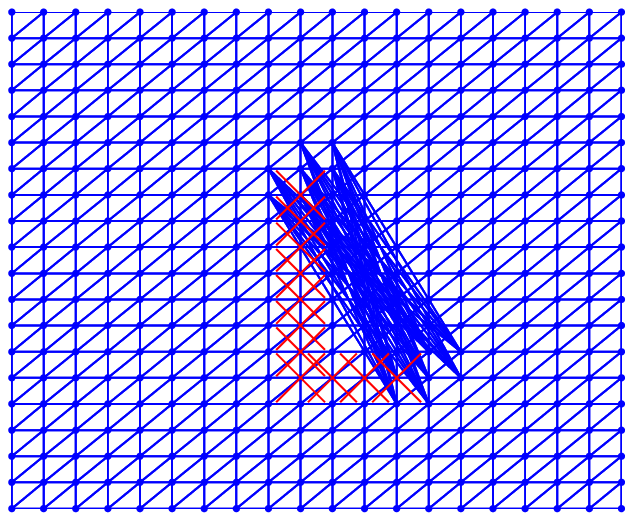}
    }
    \hspace{2 cm}
\subfigure[the same grid shown in 3d to respect distant properties measured as hop distances]{
    \label{fig:WormHole3D} \includegraphics[width=0.5\textwidth]{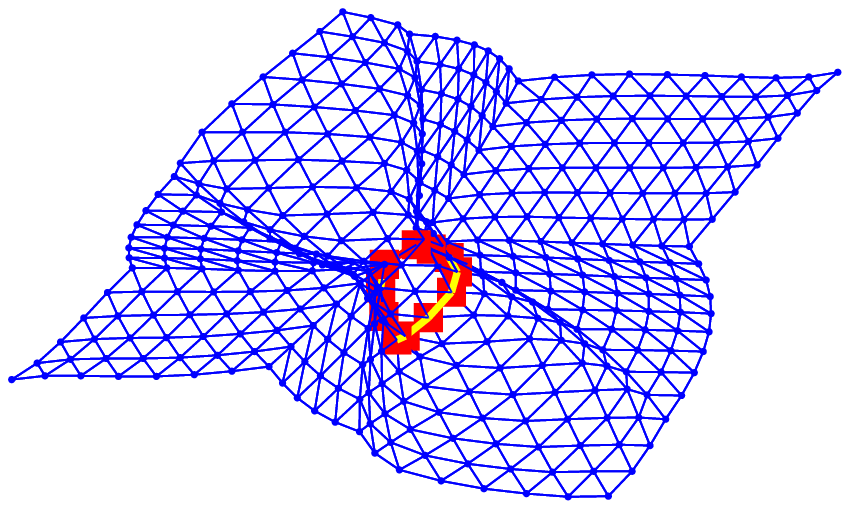}
    }
\caption{Deformation in network structure because of a worm hole. The cycle created is shown in red}
\label{fig:WormHoleFormation}
\end{figure}

The problem now reduces to identifying this cycle as a coverage hole or a wormhole. To this end, we formulate a simple algorithm shown in Table \ref{Tab:DetectingWormHoles}. The central idea of our approach is based on the observation that a cycle surrounding a coverage hole lies on the surface in which the network is deployed, while a cycle created by a worm hole will not lie in this surface. Removing a cycle which surrounds a coverage hole will therefore divide the network into two components, while removing a cycle created by a worm-hole does not. Figure \ref{fig:CycleGrowing} demonstrates this case. The network grid in Figure \ref{fig:cyleSurroundingCovHole} shows a coverage hole and the shortest cycle surrounding it. This cycle is grown homologously, i.e., without creating any more loops or destroying any, in the network as shown in Figure \ref{fig:cylceCovGrown}. When the nodes on this cycle along with their neighbors are removed from the network, the resulting network consists of two connected components as shown in Figure \ref{fig:cycleCovRemoved}. Figures \ref{fig:cycleWormHole}, \ref{fig:cycleWormGrown} and \ref{fig:cycleWormRemoved} show the same processes for a cycle created by a worm hole. As seen in \ref{fig:cycleWormRemoved}, the resulting network is still a single connected component. The two steps 1) Growing the cycle and 2) removing the nodes along with their neighbors, are explained in detail in the following sections.

\begin{table}[!h]
\label{Tab:DetectingWormHoles}
%\tbl{\textbf{Algorithm for Detecting Worm Holes}}{
\centering
\begin{tabular}{l}
\hline
Grow the current cycle to get a longer homologous cycle in the network.\\
Remove the longer cycle along with its neighbors.\\
\hspace{0.25in} \emph{if} The above process creates an isolated component, \\
\hspace{0.50in} The cycle corresponds to a coverage hole.\\\
\hspace{0.25in} \emph{else}\\
\hspace{0.50in} The cycle corresponds to a worm hole.\\
\hline
\end{tabular}
%}
\caption{\textbf{Algorithm for Detecting Worm Holes}}
\end{table}

\begin{figure}
\centering
\subfigure[Cycle surrounding a coverage hole]{
    \label{fig:cyleSurroundingCovHole}\includegraphics[width=0.2\textwidth]{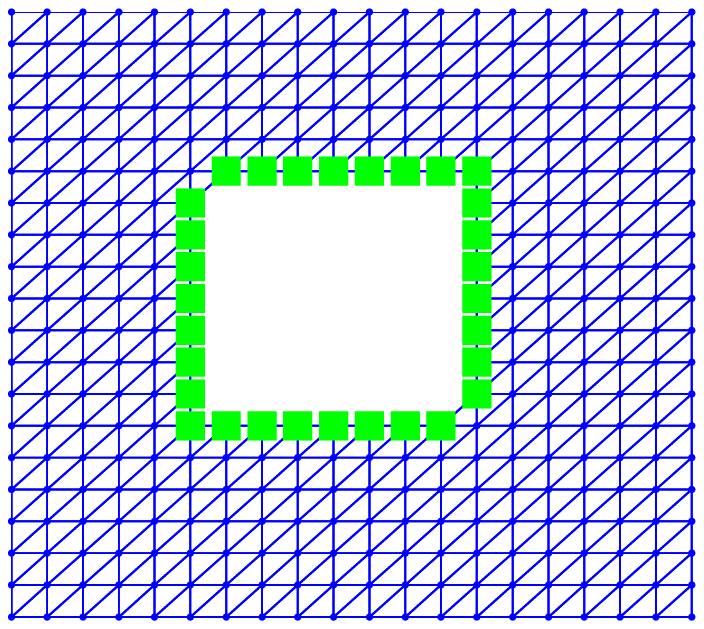}
    }
    \hspace{1 cm}
\subfigure[A cycle grown in the network]{
    \label{fig:cylceCovGrown}\includegraphics[width=0.2\textwidth]{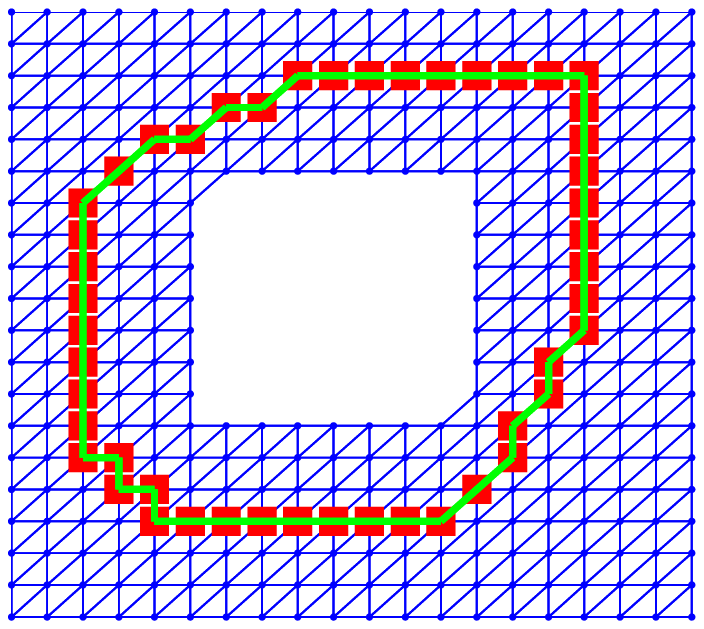}
    }
    \hspace{1 cm}
\subfigure[Grown cycle removed]{
    \label{fig:cycleCovRemoved}\includegraphics[width=0.2\textwidth]{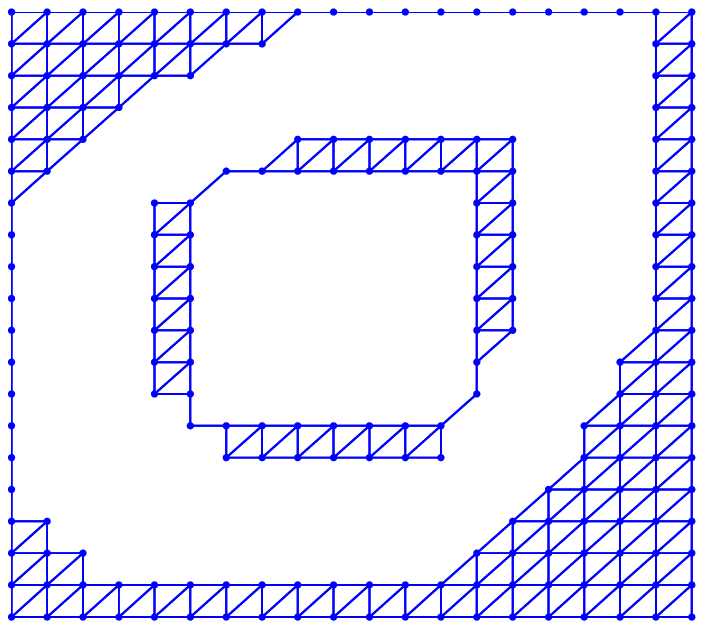}
    }
    \par
\subfigure[Cycle created by the wormhole]{
    \label{fig:cycleWormHole}\includegraphics[width=0.3\textwidth]{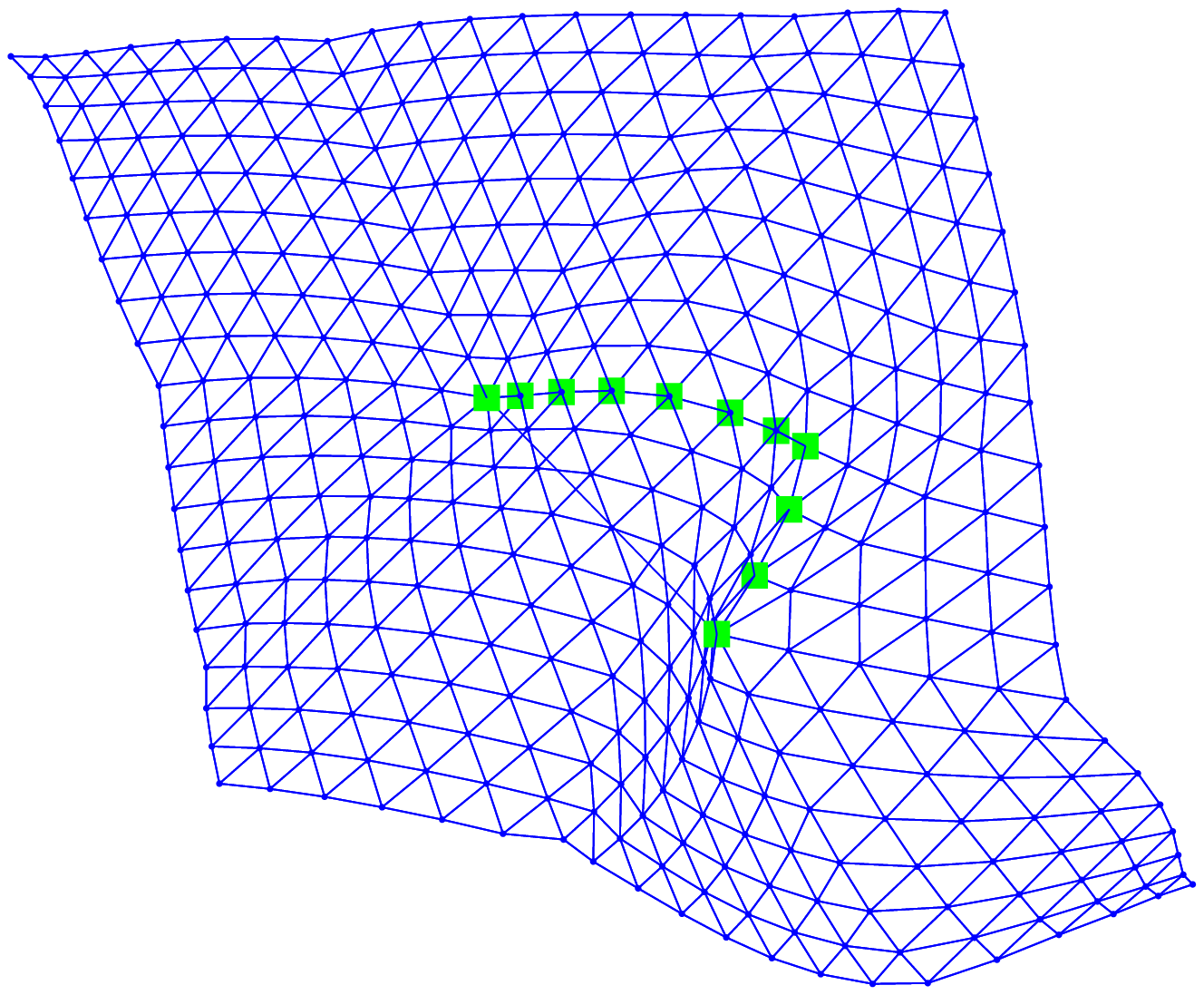}
    }
\subfigure[The cycle grown in the network]{
    \label{fig:cycleWormGrown}\includegraphics[width=0.3\textwidth]{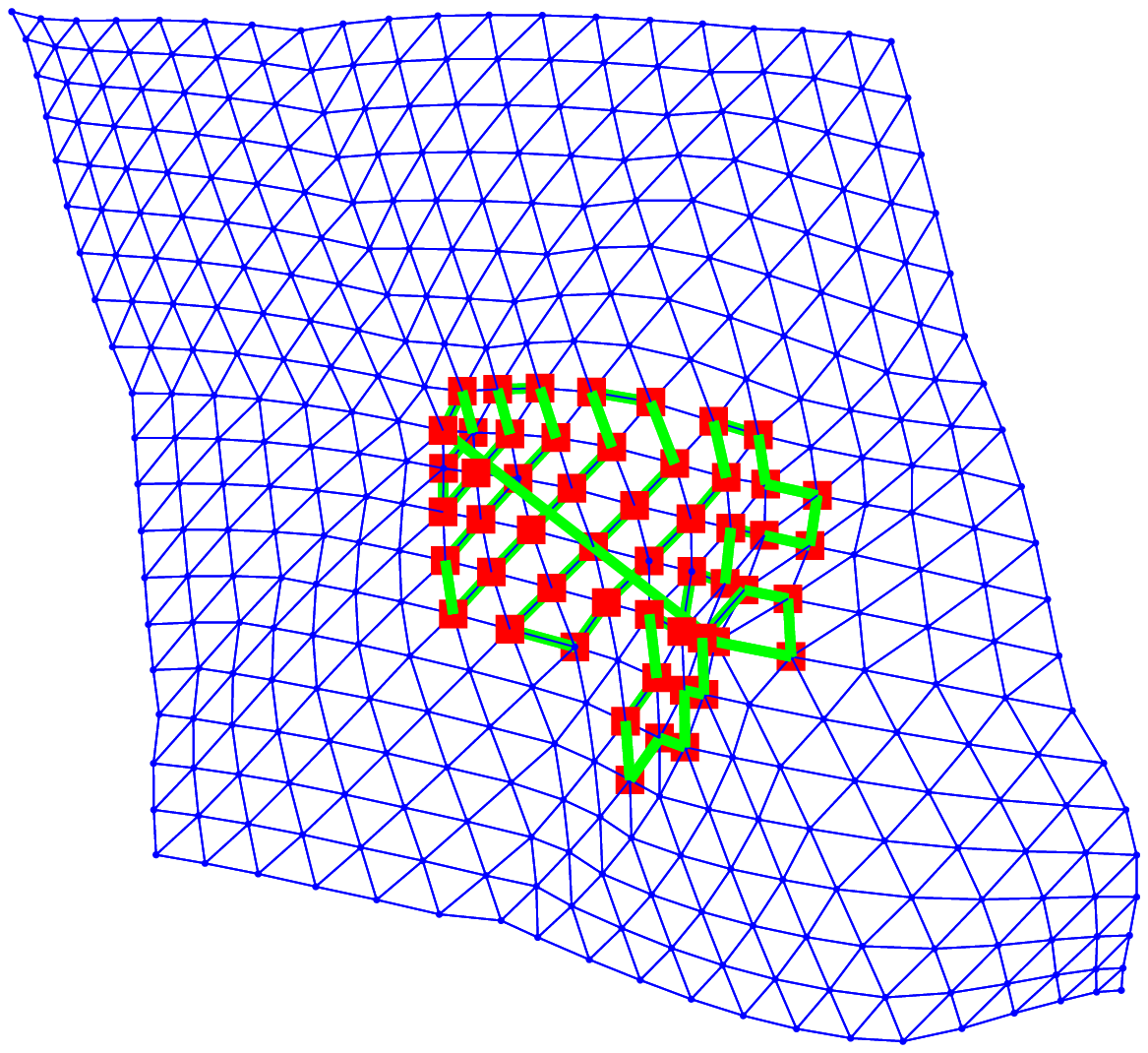}
    }
\subfigure[The grown cycle removed]{
    \label{fig:cycleWormRemoved}\includegraphics[width=0.3\textwidth]{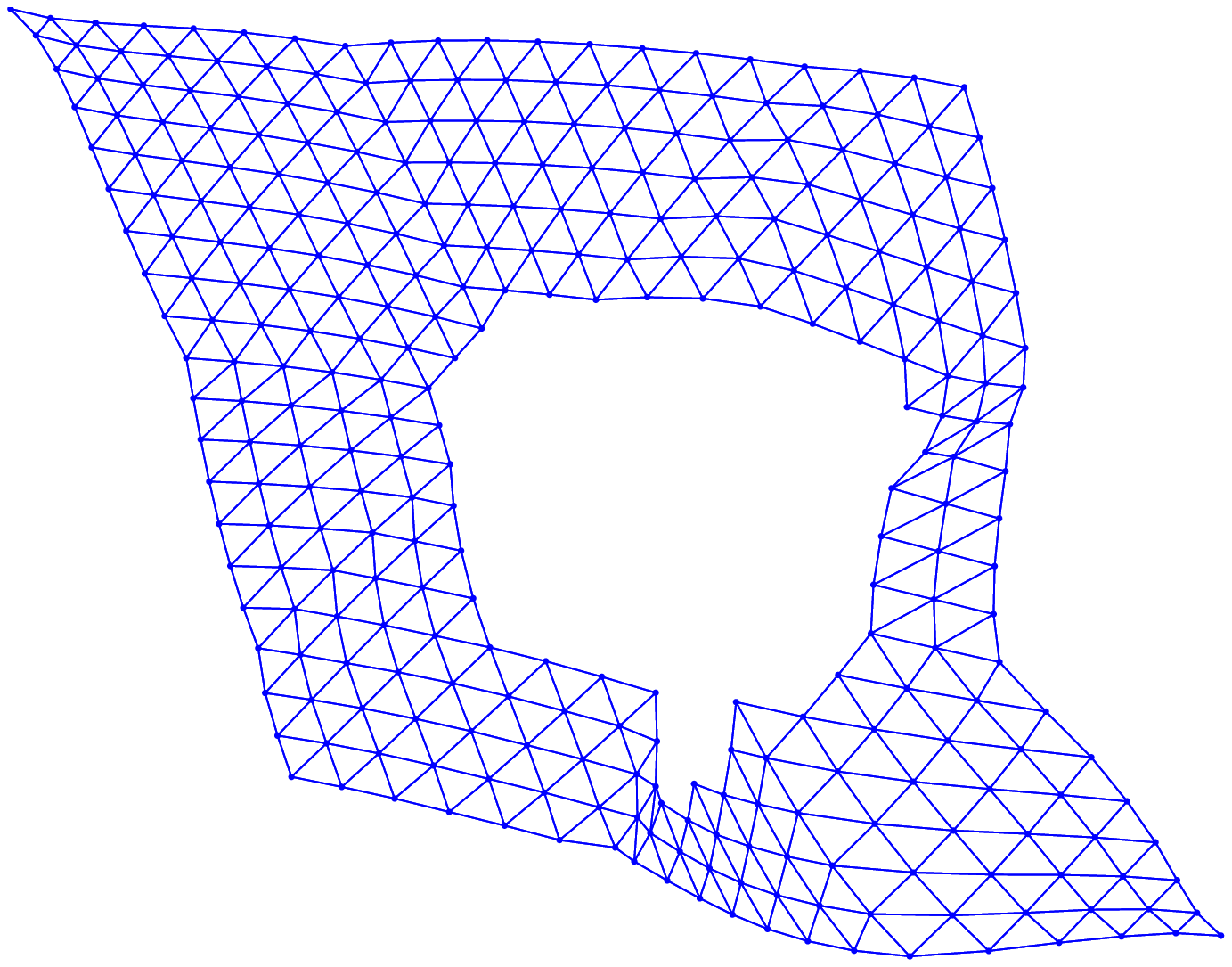}
    }
\caption{The structural difference between the cycles created by a coverage hole and by a worm hole}
\label{fig:CycleGrowing}
\end{figure}

\subsubsection{\textbf{Growing the Cycles}}
\label{subsubsec:GrowingCyles}
The algorithm described in Section \ref{sec:CoverageProblem} yields the shortest (in sense of hop-distance) non-contractible cycle which, in case of a coverage hole, leads to the boundary. Such a boundary will not serve our purpose, as removing a boundary will not partition the network into two. We hence have to grow this cycle "into" the network. The important properties which we have to abide by in the course of this cycle growth are
\begin{itemize}
\label{item:growingProperties}
\item We should not break a cycle at any time
\item We should not introduce any additional loops into the cycle.
\end{itemize}
A cycle which was originally surrounding a coverage hole will not do so after  it is broken. If we further introduce loops into the cycle during the growing procedure, a cycle due to a worm hole will now be similar to that surrounding a coverage hole. The above two properties are precisely captured by the idea of homologous chains. Two chains which belong to the same equivalence class in the homology space, are said to be homologous. Recall from Section \ref{subsec:homologicalAlgebraBackground}, the definition of homology groups as $H_k(C_{*}) = ker(\partial_k)/Img(\partial_{k+1})$. If $c_1, c_2 \in \ker\left(\partial_1\right)$ belong to the same equivalent class, then $c_1 - c_2 \in Img(\partial_2)$, i.e., their difference can be written as sum of the boundaries of 2-simplices (triangles). To that end, we "homologously" grow the cycle by applying two elementary steps, both of which add a boundary of a 2-simplex to the existing chain.
\\ \\
\textbf{Elementary Step 1}
If two adjacent nodes $v_1, v_2$ in the chain share a common neighbor $v_3$, we then remove the edge $(v_1,v_2)$ from the chain, and add the edges $(v_1,v_3)$ and $(v_3,v_2)$. This step is shown in Figure \ref{fig:elementaryStep1}.
\begin{figure}[!h]
\centering
\subfigure{
    \label{fig:elementaryStep1a}\includegraphics[width=0.17\textwidth]{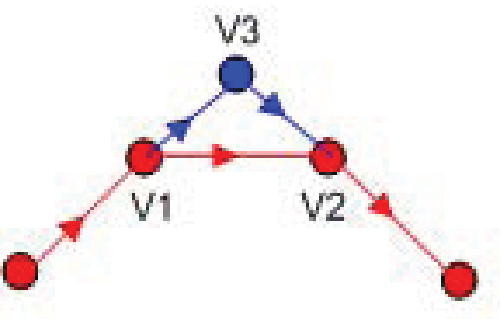}
    }
    \quad \quad \quad \quad
\subfigure{
    \label{fig:elementaryStep1b}\includegraphics[width=0.17\textwidth]{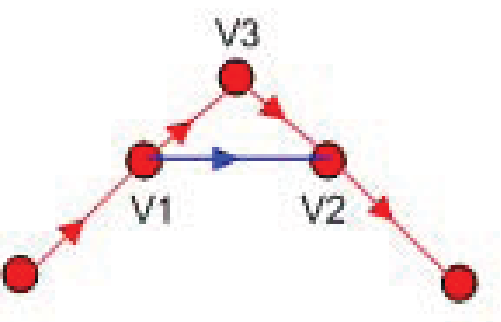}
    }
    \quad \quad \quad \quad
\subfigure{
    \label{fig:elementaryStep1c}\includegraphics[width=0.13\textwidth]{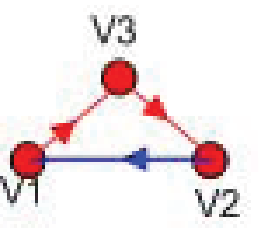}
    }
\caption{ (a)before, (b)after Elementary Step 1, and (c)the difference of the chains before and after the step. Note that (c), which is $c_1-c_2$, is a boundary of a 2-simplex. The edges shown in red are part of the chain.}
\label{fig:elementaryStep1}
\end{figure}
\\\textbf{Elementary Step 2}
If a node $v_1$ on the chain has two neighbors $v_2,v_3$ which are also on this chain, and $v_2, v_3$ are neighbors, we then remove the edges $(v_2,v_1)$, $(v_1,v_3)$ and we add the edge $(v_2,v_3)$. This step is shown in Figure \ref{fig:elementaryStep2}

\begin{figure}[!h]
\centering
\subfigure{
    \label{fig:elementaryStep2a}\includegraphics[width=0.2\textwidth]{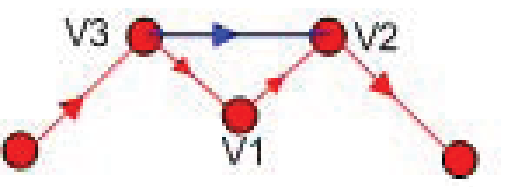}
    }
    \quad \quad \quad \quad
\subfigure{
    \label{fig:elementaryStep2b}\includegraphics[width=0.2\textwidth]{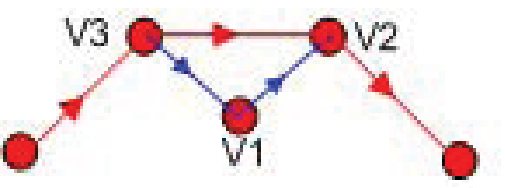}
    }
    \quad \quad \quad \quad
\subfigure{
    \label{fig:elementaryStep2c}\includegraphics[width=0.15\textwidth]{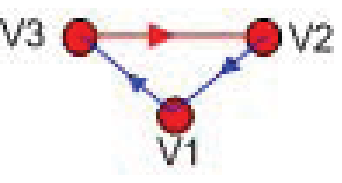}
    }
\caption{(a)before, (b)after Elementary Step 2, and (c)the difference of the chains before and after the step. Note that (c), which is $c_1-c_2$, is a boundary of a 2-simplex. The edges shown in red are part of the chain.}
\label{fig:elementaryStep2}
\end{figure}

\begin{figure}[!h]
\centering
\includegraphics[width=0.2\textwidth]{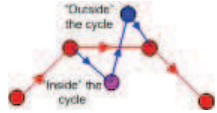}
\caption{An edge (not in the cycle) "crosses" one on the cycle. In this case, removing just the nodes on the cycle would not separate the network into two components.}
\label{fig:edgeCrossing}
\end{figure}

\begin{figure}[!h]
\centering
\subfigure[$v_1,v_2$ not in vicinity of $X$ and $Y$. A shortest path can be found in the network surrounding the nodes removed.]{
    \label{fig:EdgeRemovedNonVicinity}\includegraphics[width=0.6\textwidth]{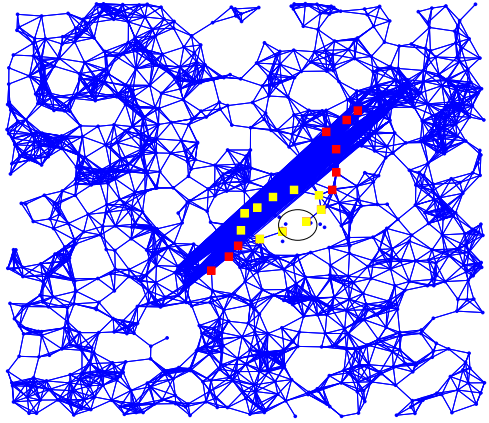}
    }
    \hspace{0.5 cm}
\subfigure[$v_1,v_2$ in vicinity of $X$ and $Y$. Alternative shortest path includes all the nodes in the cycle]{
    \label{fig:EdgeRemovalAtVicinity}\includegraphics[width=0.6\textwidth]{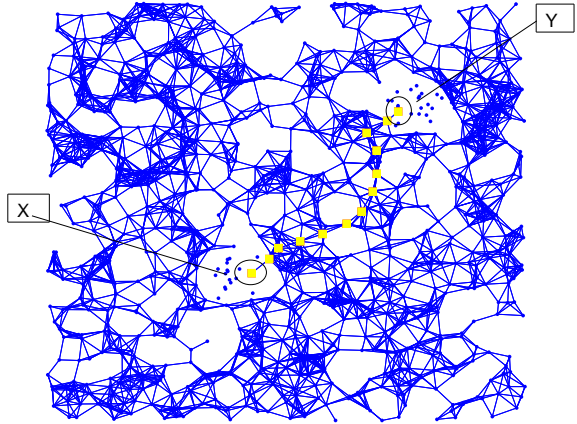}
    }
\caption{Worm Hole Localization}
\label{fig:WormHoleLocalization}
\end{figure}

\begin{figure}[!h]
\centering
\includegraphics[width=0.6\textwidth]{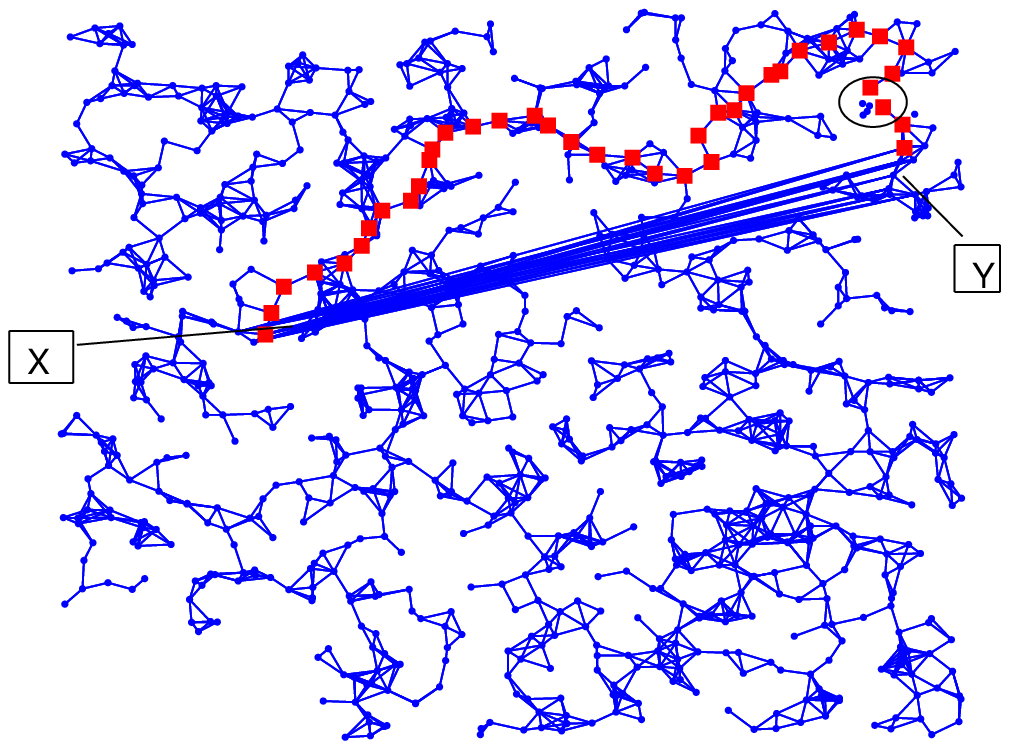}
\caption{A sparse network where the algorithm fails. The removal of nodes around $v_1$ and $v_2$ creates an isolated component in the network. }
\label{fig:sparseWormHole}
\end{figure}

\subsubsection{\textbf{Removing the cycle}}
\label{subsubsec:RemoveCycle}
We saw in Figure \ref{fig:CycleGrowing} that removing the nodes on a cycle surrounding a coverage hole, \emph{along with their neighbors}, yields two disconnected components. The reason for the insufficiency of removing just the nodes on the cycle, is that there might be two adjacent nodes in the graph whose edge "crosses" an edge on this cycle as shown in Figure \ref{fig:edgeCrossing}. However, it can be shown that removing the nodes on a cycle along with their neighbors, will result  in the  required separation \cite{udgreference}.
\par Note that until this point, we have made no assumption about the density of the nodes in the network. We can successfully identify whether the shortest non-contractible cycle identified in Section \ref{sec:CoverageProblem} corresponds to a coverage hole or a worm hole, i.e., we have detected if there is a worm hole in the network. We have thus far restricted the location of a worm hole attack to a relatively small subgraph of the network (the shortest cycle). In the following section, we present an effective approach to precisely locate the worm hole in question.

\subsection{\textbf{Worm Hole Localization}}
\label{subsec:WormHoleLocalization}
In order to precisely locate a worm hole, we first closely examine its impact. Denote the colluding nodes in the attack as $X$ and $Y$ (Figure \ref{fig:wormHoleDemonstration}). As a result of this attack, a node in the vicinity of $X$ considers all the nodes in the vicinity of $Y$ as its neighbors. This also results in the formation of a cycle. Observe that a simple way to undo the effect of a worm hole is to remove all the nodes in the vicinity of $X$ and $Y$. Note that the algorithm in Section \ref{sec:CoverageProblem} finds the shortest cycle, implying that there will exactly be two nodes on this cycle which are in the vicinity of $X$ or $Y$. In this light, we propose a simple algorithm given in Table \ref{Tab:WormHoleLocalizationAlgo} to localize the worm hole.

\begin{table}[!h]
\label{Tab:WormHoleLocalizationAlgo}
%\tbl{\textbf{Algorithm for Localizing Worm Holes}}{
\centering
\begin{tabular}{l}
\hline
for each adjacent pair $v_1,v_2$ in the cycle \\
\hspace{0.25in} Remove the edge $(v_1,v_2)$ and all neighbors of $v_1$ and $v_2$ \\
\hspace{0.50in} except those on the cycle.\\
\hspace{0.25in} Find the shortest path between $v_1$ and $v_2$. \\
\hspace{0.25in} \emph{if} This shortest path coincides with the nodes on the cycle\\
\hspace{0.50in} $v_1$ and $v_2$ are in the vicinity of $X$ and $Y$.\\
\hline
\end{tabular}
%}
\caption{\textbf{Algorithm for Localizing Worm Holes}}
\end{table}

If $v_1,v_2$ were indeed in the vicinity of $X$ and $Y$, then removing all their neighbors would remove all the spurious links caused by the worm hole. In this case, the shortest path between $v_1$ and $v_2$ would be the rest of the cycle. If on the other hand, they were not in the vicinity of $X$ or $Y$, then they would find an alternative path in the network which surrounds the deleted nodes. The result of this algorithm is shown in Figure \ref{fig:WormHoleLocalization}.
\par We note that this algorithm assumes a minimal node density to properly perform. It should however be noted that the algorithm is most effective when the network is sufficiently dense. When the network is sparse, the removal of neighbors of $v_1$ and $v_2$ may eliminate all possible paths between them, at the exception of those going through the links created by the worm hole. For example, in Figure \ref{fig:sparseWormHole}, since the network is very sparse, the removal of neighbors of $v_1$ and $v_2$ creates an isolated component, which is only linked by the worm hole. Any path will therefore have to go through one of the links created by the worm hole.

\section{\textbf{Conclusion}}
\label{sec:Conclusion}
In this work, we addressed using two specific problems; 1) Coverage Hole Localization and 2) An extended application to Worm Hole attack Localization. We have shown that topological analysis of a network provides us with substantial and ample information to assess its health, and requires minimal prior information. To that end, we have proposed an  Algebraic Topological approach as an elegant and efficient avenue for extracting useful information.  The formulation into an algebraic domain enables us to utilize extensive existing tools to effectively address these problems.  We have also, by way of the computational efficiency of our proposed approach, addressed a very crucial problem in sensor networks, namely that of prolonging the battery life of the nodes. \\

\bibliography{holeWormJournal}
\bibliographystyle{plain}

\end{document}